\newenvironment{proofof}[1]{{\vspace*{5pt} \noindent\bf Proof of #1:  }}{\hfill\rule{2mm}{2mm}\vspace*{5pt}}
\newcommand{\alg}{\textsf{ALG}}
\newcommand{\opt}{\textsf{OPT}}
\begin{document}

\title{Well-behaved Online Load Balancing Against Strategic Jobs}
\titlenote{The authors are ordered alphabetically.}



\author{Bo Li}
\authornote{This work has been partially supported by NSF CAREER Award No. 1553385. Part of this work was done when Bo Li was visiting City University of Hong Kong.}
\affiliation{%
	\institution{Stony Brook University}
  	\city{Stony Brook, USA}
}
\email{boli2@cs.stonybrook.edu}

\author{Minming Li}
\authornote{City University of Hong Kong Shenzhen Research Institute, Shenzhen, P. R. China. The work described in this paper was partially sponsored by Project 11771365 supported by NSFC. The work described in this paper was supported by a grant from Research Grants Council of the Hong Kong Special Administrative Region, China (Project No. CityU 11200518)}
\affiliation{%
	\institution{City University of Hong Kong}
	\city{Hong Kong, China}
}
\email{minming.li@cityu.edu.hk}

\author{Xiaowei Wu}
\authornote{Part of work was done when the author was a postdoc at the City University of Hong Kong. The research leading to these results has received funding from the European Research Council under the European Community's Seventh Framework Programme (FP7/2007-2013) / ERC grant agreement No. 340506.}
\affiliation{%
	\institution{University of Vienna}
	\city{Vienna, Austria}
}
\email{wxw0711@gmail.com}

\begin{abstract}
	In the online load balancing problem on related machines, we have a set of jobs (with different sizes) arriving online, and we need to assign each job to a machine immediately upon its arrival, so as to minimize the makespan, i.e., the maximum completion time. 
	In classic mechanism design problems, we assume that the jobs are controlled by selfish agents, with the sizes being their private information. Each job (agent) aims at minimizing its own cost, which is its completion time plus the payment charged by the mechanism.
	Truthful mechanisms guaranteeing that every job minimizes its cost by reporting its true size have been well-studied [Aspnes et al. JACM 1997, Feldman et al. EC 2017].
	
	In this paper, we study truthful online load balancing mechanisms that are well-behaved [Epstein et al., MOR 2016].
	Well-behavior is important as it guarantees fairness between machines, and implies truthfulness in some cases when machines are controlled by selfish agents.
	Unfortunately, existing truthful online load balancing mechanisms are not well-behaved.
	We first show that to guarantee producing a well-behaved schedule, any online algorithm (even non-truthful) has a competitive ratio at least $\Omega(\sqrt{m})$, where $m$ is the number of machines.
	Then we propose a mechanism that guarantees truthfulness of the online jobs, and produces a schedule that is almost well-behaved.
	We show that our algorithm has a competitive ratio of $O(\log m)$.
	Moreover, for the case when the sizes of online jobs are bounded, the competitive ratio of our algorithm improves to $O(1)$. 	
	Interestingly, we show several cases for which our mechanism is actually truthful against selfish machines.
\end{abstract}

\begin{CCSXML}
	<ccs2012>
	<concept>
	<concept_id>10003752.10003809.10003636.10003808</concept_id>
	<concept_desc>Theory of computation~Scheduling algorithms</concept_desc>
	<concept_significance>500</concept_significance>
	</concept>
	<concept>
	<concept_id>10003752.10010070.10010099.10010101</concept_id>
	<concept_desc>Theory of computation~Algorithmic mechanism design</concept_desc>
	<concept_significance>500</concept_significance>
	</concept>
	<concept>
	<concept_id>10003456.10003457.10003490.10003498.10003502</concept_id>
	<concept_desc>Social and professional topics~Pricing and resource allocation</concept_desc>
	<concept_significance>300</concept_significance>
	</concept>
	</ccs2012>
\end{CCSXML}

\ccsdesc[500]{Theory of computation~Scheduling algorithms}
\ccsdesc[500]{Theory of computation~Algorithmic mechanism design}
\ccsdesc[300]{Social and professional topics~Pricing and resource allocation}

\keywords{Online Load Balancing; Truthful Mechanism; Well-behaved}

\maketitle


\section{Introduction} \label{sec:intro}

In the online load balancing problem, we have a group of $m$ machines $M$.
Each machine $i\in M$ has a speed $s_i$, which is the amount of work it processes in one unit of time.
There is a sequence of online jobs $J$, arriving in an arbitrary order.
Each job $j\in J$ has a size $p_j$, which is the amount of work it must be processed before completed.
In other words, the processing time of job $j$ on machine $i$ is given by $\frac{p_j}{s_i}$.
Upon the arrival of a job, the scheduling algorithm must decide immediately which machine the job goes to, with the objective of minimizing the makespan, i.e., the maximum completion time on any machine.
The problem is often referred to as the \emph{online load balancing problem}~\cite{jacm/AspnesAFPW97,jal/BermanCK00}.
In this paper, we consider the non-preemptive and non-migrate setting, i.e., once a job is scheduled on a machine, it must be processed continuously on this machine until it is finished.

The problem has been extensively studied for decades.
For identical machines (a special case when $s_i = 1$ for all $i\in M$), a $(2-\frac{1}{m})$-competitive algorithm was proposed by Graham~\cite{siamam/Graham69}, which is later improved to $1.9201$-competitive by Fleischer and Wahl~\cite{esa/FleischerW00}.
For related machines, the current best competitive ratio is $4.311$ for randomized algorithms and $5.828$ for deterministic algorithms~\cite{jal/BermanCK00}.
The problem in which jobs have arbitrary non-zero arrival times has also been studied~\cite{orl/ChenV97,tcs/NogaS01,approx/HuangKTWZ18}.

The game theory setting of this problem is also well-studied.
In this setting, each job is controlled by a selfish agent, whose objective is to finish its own job as early as possible. The agent is the only one who knows the size of the job, and for its own interest, it may misreport the size of the job.
In the mechanism design community, we are interested in load balancing algorithms that are \emph{truthful}, i.e., every job optimizes its own objective by telling the truth.
Existing truthful mechanisms often charge the jobs for being processed.
In this case, the objective of each job is to minimize its own total cost, which is the completion time of the job plus the payment.

The Greedy algorithm~\cite{jacm/AspnesAFPW97} is a classic truthful mechanism (with competitive ratio $O(\log m)$), which assigns each online job $j$ to the machine that finishes $j$ the earliest, with tie broken by machine identity.
Since the schedule minimizes the completion time of each online job, the mechanism is truthful.
Recently, an $O(1)$-competitive truthful mechanism is proposed by Feldman et al.~\cite{ec/FeldmanFR17}.
The main idea is to set dynamic prices on the machines such that the selfish behaviors of jobs result in a schedule similar to the one produced by Slow-Fit~\cite{jacm/AspnesAFPW97}, an $O(1)$-competitive (non-truthful) algorithm for the online load balancing problem on related machines.

Another well-studied game theory model for load balancing assumes that the jobs are given offline, while the machines are controlled by selfish agents, with the speed being its only private information~\cite{siamcomp/ChristodoulouK13,mor/EpsteinLS16}.
To guarantee truthfulness of machines, a scheduling mechanism pays the machines for working~\cite{stacs/AndelmanAS05,esa/Kovacs05} (otherwise all machines will misreport a speed of $0$ to get rid of any assignment).
Each machine aims at maximizing the payment it receives, minus its own makespan, which is the time by which all jobs assigned to it are finished.
It has been shown that as long as the scheduling algorithm is \emph{monotone}, there exists a payment function such that the mechanism is truthful~\cite{stacs/AndelmanAS05}.
Roughly speaking, monotonicity requires that, if a machine claims a higher speed, then its workload (or makespan) does not decrease.

A natural and interesting question is, if both the machines and jobs are controlled by selfish agents, is it possible to design a truthful mechanism with bounded competitive ratio?

In this paper, we make a few steps towards solving this question.
We first show some simple mechanisms that are truthful for both the jobs and machines, but have large competitive ratios, e.g., $O(m)$.
In order to improve the competitive ratio, we relax the truthfulness of machines to {\em well-behavior} of the schedule, which guarantees a different kind of monotonicity with respect to speeds of machines.

\subsection{Well-behaved Schedules}

The concept of ``well-behaved schedule'' was first studied by Epstein et al.~\cite{mor/EpsteinLS16} for the case when related machines are controlled by selfish agents.
A well-behaved schedule guarantees that a machine has workload no smaller than that of any slower machine.
Note that well-behavior is closely related to truthfulness.
Imagine two machines with speeds $s_1 < s_2$.
If the schedule is not well-behaved, i.e., the faster machine has workload less than that of the slower machine, then the faster machine has incentive to lie.
More specifically, if the faster machine claims speed $s^2_1/s_2$, then its workload becomes larger: the identities of the two machines are swapped.
Therefore the mechanism is not monotone, and not truthful.

Without knowing the future jobs to arrive, it is very difficult (or maybe impossible) to achieve a monotone schedule with small competitive ratio.
Since monotonicity is defined on the set of online jobs, which the algorithm is oblivious of, the natural candidate algorithm must somehow adopt independent assignment of jobs, i.e., the assignment of any two jobs are independent.
However, to guarantee a bounded competitive ratio, the assignments of jobs must be correlated, e.g., if a machine receives many jobs, then the next one should avoid going to this machine.

On the other hand, well-behavior is a more approachable measurement of the schedule.
Indeed, since a well-behaved schedule guarantees that no matter what jobs may arrive, the faster machines receive more jobs than slower machines, without knowing what jobs may arrival, the risk-averse machines have no incentive to lie.

Unfortunately, existing truthful mechanisms~\cite{jacm/AspnesAFPW97,ec/FeldmanFR17} against strategic jobs are not well-behaved.
As a simple example, let us consider the Greedy algorithm~\cite{jacm/AspnesAFPW97}.
Imagine that we have two machines with speed $1$ and $1+\epsilon$, respectively, where $\epsilon>0$ is arbitrarily small.
Let there be two jobs, the first one with size $1$ and the second one with size $\frac{1}{\epsilon}$.
Then the slower machine has workload $\frac{1}{\epsilon}$, while the faster machine has workload $1$, much smaller than the workload of the slower machine.
Actually, as we will show later, being well-behaved is difficult: any well-behaved algorithm has competitive ratio $\Omega(\sqrt{m})$.

\subsection{Our Results}\label{ssec:our_results}

In this paper, we consider a stronger definition of well-behaved schedules. 
We call a schedule {\em well-behaved} if the makespan (the total workload divided by the speed) of a machine is no smaller than that of any slower machines.
Note that under this definition, a faster machine gets more workload than proportional to its speed.
As a consequence, it is more likely for the machines to be truthful, given that the payment function are defined as in~\cite{stacs/AndelmanAS05}.

First, we show that being well-behaved is difficult, even under the weaker version~\cite{mor/EpsteinLS16}.
We show that there exists a sequence of online jobs on a group of machines such that any well-behaved schedule performs badly.

\begin{theorem}[Hardness of Well-Behaved Schedule] \label{th:hardness}
	Well-behaved algorithms have competitive ratios $\Omega(\sqrt{m})$.
\end{theorem}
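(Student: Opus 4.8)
The first thing I would pin down is what ``well-behaved'' forces on an \emph{online} algorithm. Since the adversary may stop the job sequence after any job, a well-behaved online algorithm must produce a well-behaved schedule after \emph{every} prefix, not just at the end. Under the weaker (workload) version it suffices to prove the bound here, so I index the machines by $s_1 \le \cdots \le s_m$ and work with the invariant that, at all times, $w_1 \le \cdots \le w_m$, where $w_i$ is the total size assigned to machine $i$. I would then argue against this invariant directly via an adaptive adversary, rather than against any specific algorithm. I would also fix a speed profile with a genuine spread of speeds (say $s_i=i$, or a geometric set), because for equal speeds the well-behaved constraint is vacuous and the instance degenerates to the $O(1)$-competitive identical-machines case.

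The plan is to run the adversary in $k=\lfloor\sqrt m\rfloor$ rounds while maintaining the invariant that every job released so far admits an offline assignment of makespan $1$, so that $\opt\le 1$ throughout, and to make the algorithm's makespan grow by roughly one unit per round. The monotone-workload invariant is the lever: a job of size $p$ may be appended to a machine $i<m$ only if $w_i+p\le w_{i+1}$, i.e.\ only if it ``fits in the gap'' $w_{i+1}-w_i$; if $p$ exceeds every current gap then the job is forced onto a fastest machine, and more generally the adversary can pick $p$ so that the only admissible machines already carry load comparable to their speed. In each round I would release such a job, tuned to the algorithm's \emph{current} configuration, so that by monotonicity it must land on an already-loaded machine and raise that machine's makespan, while keeping $p$ small enough relative to its matching fast machine that $\opt$ can still absorb it within makespan $1$ by spreading. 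Summing the per-round increments should yield a machine of makespan $\Omega(k)=\Omega(\sqrt m)$ against $\opt=O(1)$, which (being proved for the weaker version) also applies to the stronger makespan version.

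The hard part, and the reason a naive sequence is not enough, is that the fastest machine is a \emph{safe harbour}: the algorithm may always dump a job there without violating monotonicity, so no single job can be forced onto a slow machine, and against a fixed ``increasing-size'' sequence a clever algorithm can keep the wasted load on fast machines negligible. The core of the argument must therefore be an adaptive, amortized one, showing that the algorithm cannot simultaneously keep every makespan below $\sqrt m$ and keep the profile monotone once the adversary reacts to its placements --- essentially a potential/exchange argument tracking the gaps $w_{i+1}-w_i$ together with the makespans, proving that each round either directly increases some makespan or shrinks the available gaps so that a later round is forced to. The second technical point is the $\opt\le 1$ guarantee: I would exhibit an explicit offline assignment of all released jobs onto speed-matched machines and verify it has makespan $1$, which constrains the sizes the adversary is permitted to use. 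Balancing the two requirements --- enough forcing to accumulate $\sqrt m$, yet sizes small enough to keep $\opt$ at $1$ --- is exactly what I expect to pin the bound at $\sqrt m$, and getting this calibration right is the main obstacle.
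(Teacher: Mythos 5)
Your proposal sets the right target (keep $\opt\le 1$ while forcing some machine to makespan $\Omega(\sqrt m)$) and correctly observes that well-behavior must hold after every prefix, but it stops precisely where the argument has to happen: you never fix the speeds, never fix the job sizes, and never construct the potential/exchange argument you appeal to. You yourself flag the ``calibration'' between forcing $\sqrt m$ load and keeping $\opt\le 1$ as the main obstacle and leave it unresolved. As written this is a plan, not a proof.

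The plan is also steered by a false premise. You claim that against a fixed increasing-size sequence a clever algorithm can keep the wasted load on fast machines negligible, so an adaptive, amortized adversary is necessary. The paper's proof is exactly an oblivious increasing-size sequence: machines with speeds $s_i=1+2^{i-m}$ and $m$ jobs of sizes $s_1,\dots,s_m$ released in increasing order. The idea you are missing is a counting argument on the \emph{number} of jobs per machine rather than on gaps or per-round makespan increments: because each job's excess over $1$, namely $2^{j-m}$, strictly exceeds the sum of the excesses of all earlier jobs, a well-behaved schedule must keep the job counts of the loaded machines strictly increasing in speed after every prefix --- if the machine receiving job $j$ ended with as many jobs as the next faster machine, that faster machine would have strictly smaller workload, violating well-behavior. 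Strictly increasing counts over $m$ jobs force some machine to carry $\Omega(\sqrt m)$ jobs, each of size at least $1$, on a machine of speed at most $2$, while $\opt=1$ by matching job $i$ to machine $i$. Your ``safe harbour'' worry dissolves here: dumping jobs on the fastest machine only inflates its job count and hence its makespan. Without this (or an equivalent) forcing mechanism, your rounds-based scheme does not get off the ground.
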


On the other hand, we show that if we relax the requirement of well-behavior slightly, i.e., we require a machine $i$ to have a no smaller makespan than machine $i'$ only if $s_i \geq 2 s_{i'}$, then a much smaller competitive ratio can be obtained.
We propose such kind of ``almost well-behaved'' scheduling algorithm that is truthful for online jobs, and achieves a competitive ratio of $O(\log m)$. Moreover, we show that the competitive ratio improves to $O(1)$ if the online jobs have similar sizes. 

\begin{theorem} \label{th:well-behaved}
	There exists an $O(\log m)$-competitive algorithm that is truthful for jobs, and produces an almost well-behaved schedule for any online jobs.
	Moreover, the competitive ratio improves to $O(\frac{p_\text{max}}{p_\text{min}})$ when $p_j\in [p_\text{min}, p_\text{max}]$ for all jobs $j\in J$.
\end{theorem}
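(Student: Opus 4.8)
The plan is to build a posted-price mechanism whose three targets (truthfulness, almost well-behavior, and an $O(\log m)$ ratio) are each handled by one structural feature. First I would partition the machines into geometric speed classes $C_0,C_1,\dots$, where class $C_k$ collects machines whose speeds lie in a fixed factor-$2$ window, chosen so that $s_i\ge 2s_{i'}$ forces $i$ and $i'$ into different classes with $i$ the faster. This partition is the vehicle for the factor-$2$ relaxation of well-behavior. Each machine $i$ carries a nondecreasing step price menu $c_i(\cdot)$, indexed by the makespan level the machine would reach. When a job of reported size $q$ arrives with current workload $w_i$ on machine $i$, it is (virtually) assigned to the machine minimizing its total cost
\[
\text{cost}_i(q)=\frac{w_i+q}{s_i}+c_i\!\left(\frac{w_i+q}{s_i}\right),
\]
the first term being the job's completion time and the second the price charged. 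The selfish min-cost choice of each job is exactly what drives the assignment.

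\textbf{Truthfulness.}
Since each $c_i$ depends only on the makespan level the job pushes machine $i$ to, and not on the job's identity, the achievable cost of a job on a fixed machine is a monotone function of its true size, and the assignment is a minimum of these per-machine curves. I would establish truthfulness by the standard critical-value route: fix all other jobs, show the allocation rule is monotone (a larger true size weakly moves the job to a machine/level where it pays weakly more), and then invoke the payment-identity characterization to conclude that no misreport beats $\text{cost}_{i^\ast}(p_j)$ at the truthful choice $i^\ast$. The step structure of $c_i$ is what makes the per-machine cost curves single-crossing across machines, which is the property the monotonicity argument needs.

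\textbf{Almost well-behavior and competitive ratio.}
I would maintain, by induction on arrivals, the invariant that every machine in a faster class has makespan at least that of every machine in a strictly slower class; calibrating each slow machine's price to jump sharply at the makespan level currently held by the faster classes forces the selfish choices to fill faster classes first, preserving the invariant and hence the factor-$2$ almost-well-behaved guarantee. For the competitive ratio I would use the two standard lower bounds on $\opt$ (the total-work bound $\opt\ge\sum_j p_j/\sum_i s_i$ and the single-job bound). In the bounded case $p_j\in[p_\text{min},p_\text{max}]$ the last job on any machine adds only $p_\text{max}/s_i$ over the balancing level, and comparing with $\opt\ge p_\text{min}/s_i$-type bounds yields $O(p_\text{max}/p_\text{min})$. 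For the general case I would group jobs into $O(\log m)$ geometric size classes, apply the bounded-style argument within each class, argue that only $O(\log m)$ classes interact nontrivially with the makespan on any machine, and sum, recovering the $O(\log m)$ bound of Greedy while additionally keeping the schedule almost well-behaved.

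\textbf{Main obstacle.}
The crux is designing the single menu $c_i(\cdot)$ that is simultaneously (i) compatible with the monotone-allocation requirement for truthfulness, (ii) steep enough at the right levels to enforce the cross-class makespan ordering, and (iii) still balanced enough to lose only $O(\log m)$ in makespan. Goals (ii) and (iii) genuinely conflict, since pushing load upward to preserve the ordering risks overloading the fast classes; and Theorem~\ref{th:hardness} shows there is no room at all if one demands exact well-behavior, so the factor-$2$ relaxation must be exploited crucially to break the $\Omega(\sqrt m)$ barrier. I expect the delicate case analysis to concentrate on checking that the price jumps forced by (ii) do not destroy the single-crossing structure required by (i), and that the resulting upward redirection never inflates any class's makespan beyond the $O(\log m)\cdot\opt$ budget.
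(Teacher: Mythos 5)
Your high-level architecture (round speeds into factor-$2$ classes, post prices, maintain the cross-class makespan ordering by induction, and use the total-work lower bound on $\opt$) matches the paper's mechanism ${\mathcal M}_\text{PPR}$, but two of your three pillars have genuine gaps. First, truthfulness: you let the price on machine $i$ be $c_i\bigl(\frac{w_i+q}{s_i}\bigr)$, a nondecreasing function of the level the \emph{reported} size $q$ pushes the machine to. A job's completion time is determined by its true size no matter what it reports, so under your menu a job can report $q$ smaller than $p_j$, land on the same (or a cheaper) level, pay strictly less, and finish at the same time --- under-reporting is weakly dominant, not truthful reporting. Your proposed fix of ``invoking the payment-identity characterization'' is circular: Myerson-type payments are pinned down by the allocation rule, so you cannot both prescribe the menu $c_i$ and assume the identity holds. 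The paper avoids this entirely by making the price $\rho_i(j)$ a function only of the machines' \emph{current} makespans, posted before job $j$ reveals anything; a misreport then cannot change any price, only move the job to a machine that is weakly worse for it, and truthfulness is immediate. Concretely the paper sets $\pi_i=\frac{s_i}{s_{i+1}}(C_{i+1}(j)-C_i(j))$ and $\rho_i(j)=\sum_{l\geq i}\pi_l$, and a one-line computation shows the selfish choice preserves $C_i(j{+}1)\leq C_{i+1}(j)$, which is your invariant.

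Second, the $O(\log m)$ bound: your plan to partition \emph{jobs} into geometric size classes, run a bounded-size argument per class, and sum over $O(\log m)$ classes does not go through. The online assignments of different size classes are coupled on the same machines, so the makespan of the combined run is not bounded by the sum of per-class makespans; moreover, nothing limits the number of relevant size classes to $O(\log m)$ (job sizes need not be tied to $m$). The paper's route is different and is the heart of the proof: it shows (Lemma~\ref{lemma:difference-of-C}) that at all times the makespans of machines in adjacent \emph{speed} classes $R_k$ and $R_{k-1}$ differ by at most $4\cdot\opt_2$. The argument is that if the faster class has run $2\cdot\opt_2$ ahead, then among the jobs responsible for that last increment, $\opt$ must have placed one on a slower machine, so that job has $p_j\leq \opt_2\cdot 2^{k-1}$; plugging this into the price formula shows that job would have had strictly smaller cost on the slower machine, a contradiction. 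Telescoping over the $\log m$ speed classes above $s_m/m$ (slower machines carry negligible total capacity) gives $\alg=O(\log m)\cdot\opt$. I would redirect your effort from the job-size decomposition to proving this adjacent-speed-class gap bound, and replace your level-indexed menu with prices fixed before each arrival.
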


Besides being almost well-behaved, our mechanism (detailed description in Section~\ref{sec:mechanism}) has several other desirable properties.

First, our mechanism is \emph{fair} in the sense that if two machines have the same speed, then their workload differs by at most one job. 
Our mechanism guarantees that a job is assigned to a machine only when the machine has the minimum workload among all machines with the same speed. 
Indeed, since we break tie randomly, the machines with the same speed have the same {\em expected} makespan. 
Such fairness is essentially the same with the notion of envy-free up to any item in resource allocation literature
when all agents have the same entitlement~\cite{ec/caragiannis2016,soda/plaut2018}.
Second, our mechanism is \emph{anonymous}. 
In our mechanism, we guarantee that the schedule produced is determined only by the speeds submitted by the machines, which means that if a machine changes its identity, the resulting schedule remains unchanged.
As a consequence, our mechanism is also {\em symmetric}, i.e., if two machines swap their speeds, then the resulting workloads will also be swapped.
We note that anonymity and symmetry are also important properties 
in the cooperative game theory and facility location literature~\cite{book/nisan2007}, 
which keeps the system stable,
and are necessary conditions for Shapley value~\cite{book/shapley1953}
or Nash's bargaining solution~\cite{Econometrica/nash1950,book/moulin1991}.

\paragraph{Our Technique.}
To guarantee that the makespan of a faster machine is not smaller than that of a slower machine, we assign each online job to a machine only if the well-behavior is maintained.
In other words, we maintain the invariant that the makespans of machines are non-decreasing with respect to the speeds, and we assign job $j$ to machine $i$ only if after the assignment, the makespan of machine $i$ is not larger than that of any faster machine.

Two questions arise. First, how can we make the mechanism truthful for online jobs, given that sometimes we assign a job to a machine other than the one that finishes the job the earliest? Second, is the competitive ratio of the algorithm bounded?

To answer the first question, we use the technique of dynamic posted pricing~\cite{ec/HartlineR09,ec/FeldmanFR17}. We show that depending on the current makespans of machines, we can set prices on the machines (which is the charging to the jobs) such that the selfish behaviors of online jobs result in a well-behaved schedule.
As a simple example, suppose there are two machines with speeds $s_1 = 1$ and $s_2 = 2$, and the current makespan of machine $1$ is $1$ while machine $2$ has makespan $2$. We set a price $0.5$ on machine $1$ and price $0$ on machine $2$.
Then if the next job $j$ has size $p_j \leq 1$, its cost when going to machine $1$ is $1 + p_j + 0.5$, which is no larger than $2 + {p_j}/2$, the cost when going to machine $2$.
On the other hand, if $p_j > 1$, then going to machine $2$ has a strictly smaller cost.
Hence the next job goes to machine $1$ only if its size is at most $1$, which implies that the resulting schedule remains well-behaved.

For the second question, in contrast to the $\Omega(\sqrt{m})$ hardness result, we show that if we only impose the monotonicity on the makespans of machines whose speeds differ by a factor of $2$, then a much better competitive ratio $O(\log m)$ can be obtained.
The main idea is similar to the analysis of the Greedy algorithm~\cite{jacm/AspnesAFPW97}.
We show that the makespans of machines with speeds $s$ and $\frac{s}{2}$ differ by at most $O(\opt)$, as otherwise (by the way we set the dynamic prices) the jobs will stop going to the faster machines.
Combining this idea with the fact that machines with speed less than $\frac{s}{m}$, where $s$ is the maximum speed of a machine, has little contribution to the whole schedule, we can show that the makespan of our schedule is bounded by $O(\log m)\cdot \opt$.

Interestingly, as illustrating examples, we show that our mechanism is indeed truthful in some cases.

\begin{theorem} \label{th:truthful}
	Our mechanism is also truthful for machines when $m=2$, or when online jobs have unit size.
\end{theorem}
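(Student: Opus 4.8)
The plan is to reduce machine truthfulness to \emph{monotonicity} of the produced schedule and then to verify monotonicity separately in the two special cases. Recall from Andelman et al.~\cite{stacs/AndelmanAS05} that, for related machines, a scheduling rule admits a payment scheme under which the mechanism is truthful for the machines if and only if it is monotone, i.e.\ the (expected) workload assigned to a machine is non-decreasing in its reported speed whenever the other reports are held fixed. Since our mechanism breaks ties uniformly at random, I would work throughout with expected workloads, so that the relevant quantity is symmetric among equal-speed machines and monotonicity is well defined. It therefore suffices to fix one machine $i$, fix the speeds of all the others, and prove that its workload $W_i(s_i)$ is non-decreasing in $s_i$; the required payments then follow verbatim from~\cite{stacs/AndelmanAS05}, and they do not disturb the truthfulness for jobs established in Theorem~\ref{th:well-behaved}, since they are charged only on the machine side.

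For the unit-size case I would first remove the dependence on arrival order. When every job has size one, the price posted on a machine is a function of its current makespan and its speed only, so each identical job is routed to the machine minimising its resulting makespan subject to the well-behaved constraint; because the jobs are indistinguishable this is exactly a greedy water-filling rule, whose final load vector $(k_1,\dots,k_m)$ depends only on the multiset of speeds and on the number of jobs $n$, not on the order. The task then becomes the static claim that $k_i$ is non-decreasing in $s_i$, which I would prove by a local exchange argument: if raising $s_i$ to $s_i'>s_i$ strictly lowered the count to $k_i'<k_i$, then, since $n$ is unchanged, some machine $j$ gained a job; but in the faster instance machine $i$ then has makespan $k_i'/s_i'$ smaller than machine $j$'s by more than one of $i$'s increments $1/s_i'$, so the greedy rule would have preferred to place $j$'s last job on $i$ --- contradicting that the configuration is the order-independent fixed point of the assignment. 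Monotonicity, and hence truthfulness, follows in the unit-size case.

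For $m=2$ I would couple two executions differing only in machine $1$'s reported speed, say $a<a'$ with machine $2$ fixed at $b$, and track the two well-behaved configurations job by job. Writing the state through the makespan gap, the posted price sends an incoming job of size $p$ to the slower machine exactly when it keeps that machine's makespan below the faster one's; equivalently there is a slack $\tau=a(M_2-M_1)$ (the threshold of the worked example in Section~\ref{sec:intro}) such that the job joins machine $1$ iff $p\le\tau$, and a one-line computation gives the updates $\tau\mapsto\tau-p$ on an assignment to machine $1$ and $\tau\mapsto\tau+(a/b)\,p$ on an assignment to machine $2$. Coupling the runs, I would introduce an amortised potential $\Phi=\bigl(W_1^{B}-W_1^{A}\bigr)+c\,\bigl(\tau^{B}-\tau^{A}\bigr)$ and pick $c\in[\,b/(a'+b),\,b/(a+b)\,]$; a short case check (both-to-$1$, both-to-$2$, and the two split steps) shows $\Phi$ is non-decreasing from its initial value $0$, so $\Phi\ge 0$ throughout. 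The regimes $a\le b\le a'$ and $b\le a\le a'$, where machine $1$ is the faster machine, would then follow by the same argument with the two machines' roles exchanged, gluing the pieces at $a=b$ via the anonymity and symmetry of the mechanism (Section~\ref{ssec:our_results}).

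The step I expect to be the main obstacle is precisely this coupling for $m=2$. Unlike the unit-size analysis, where order-independence collapses everything to a static apportionment, here each assignment depends on the full history through the two makespans, and the naive invariant $\tau^{A}\le\tau^{B}$ is \emph{not} preserved: a split step increases machine $1$'s advantage in the faster run but can drive the slack gap negative, after which a large job may be accepted by the slower-reported run yet rejected by the faster-reported one. Consequently the potential above yields only the one-sided bound $W_1^{B}-W_1^{A}\ge -c\,(\tau^{B}-\tau^{A})$, which is vacuous when the slack gap is positive; indeed for the bare posted-price rule a two-step lookahead can already end with $\tau^{B}>\tau^{A}$ and $W_1^{B}<W_1^{A}$. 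Closing this gap is the heart of the argument: one must bring in the mechanism's global makespan estimate (Section~\ref{sec:mechanism}) to synchronise the thresholds of the two coupled runs and exclude the inverted-slack steps, and it is exactly this two-machine synchronisation that fails to generalise, confining the clean monotonicity guarantee to $m=2$ and to the order-independent unit-size case.
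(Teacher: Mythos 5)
Your overall frame---reduce machine truthfulness to monotonicity via \cite{stacs/AndelmanAS05} and verify monotonicity in each special case---matches the paper, but both case analyses have genuine gaps. For the unit-size case, you describe the assignment as ``routed to the machine minimising its resulting makespan subject to the well-behaved constraint,'' i.e.\ a water-filling rule, and then invoke a local exchange argument. That is not the mechanism: a unit job goes to the machine minimising $C_i+\frac{1}{s_i}+\sum_{l\ge i}\pi_l$, where the prices $\pi_l=\frac{s_l}{s_{l+1}}(C_{l+1}-C_l)$ couple all machines, so the exchange step ``machine $i$'s makespan is smaller than machine $j$'s by more than $1/s_i'$, hence the rule would have preferred $i$'' does not follow without engaging the price terms. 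The paper's proof of Lemma~\ref{lem:truthful:unit} does exactly this missing work: it fixes the rounded speed jump $2^k\to 2^{k+1}$, assumes the job count drops from $L$, shows every machine in $R_{\ge k+1}$ strictly loses workload (via $C_{l+1}(j)\le C_l(j)+\frac{1}{s_l}$), and then uses the explicit bound $\frac{1}{2}(\frac{L-1}{2^k}-C_j)$ on price differences to show no slower machine gains a job (a floor/ceiling computation), contradicting conservation of the $n$ jobs. Your sketch asserts the conclusion of that computation rather than proving it.

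For $m=2$ the situation is worse: your own text concedes that the potential $\Phi$ yields only a vacuous one-sided bound and that ``closing this gap is the heart of the argument,'' and the proposed fix---synchronising the two runs via ``the mechanism's global makespan estimate''---refers to a component that does not exist; ${\mathcal M}_\text{PPR}$ maintains no such estimate, only the current makespans and the prices derived from them. So the $m=2$ case is not proved. The paper takes a different and more elementary route: it rounds speeds down to powers of $4$ (not $2$) for this case, handles equal speeds separately, and otherwise argues per job. If a job of size $P$ goes to machine $1$ when $s_1=1$ (so $P\le R-L$ by well-behavior) but not when $s_1=4$ (so $L'+\frac{P}{4}\ge R'$), the workload-conservation identity $L+4^iR=4L'+4^iR'$ for the common prefix gives $4L'\ge L+P$ directly, so machine $1$'s workload in the faster run already dominates. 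Your threshold dynamics $\tau\mapsto\tau-p$ and $\tau\mapsto\tau+\frac{a}{b}p$ are correct, but the coupling built on them does not close, and no amount of case-checking on $\Phi$ will substitute for an argument like the paper's conservation identity (or some other new idea).
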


As far as we know, our work is the first result that achieves truthfulness for an online load balancing problem in which both the jobs and machines are controlled by selfish agents.
A natural and fascinating open question is whether there exist non-trivial mechanisms that are truthful for both jobs and machines in general, and achieve a competitive ratio $o(m)$.

\subsection{Other Related Work}

\paragraph{Truthful Mechanism for Selfish Machines.}
The work of scheduling mechanisms against selfish machines was initiated by Nisan and Ronen~\cite{stoc/NisanR99}, in which the case of unrelated machines is studied.
They propose a truthful mechanism (which adopts independent assignments of jobs) that is $m$-competitive, and show that no truthful mechanism can do better than $2$-competitive.
The hardness result has been improved to $1+\sqrt{2}\approx2.414$ for $3$ or more machines~\cite{algorithmica/ChristodoulouKV09}, and to $1+\phi\approx 2.618$ when the number of machines tends to $\infty$~\cite{algorithmica/KoutsoupiasV13}.

On the other hand, the problem on related machines is much easier.
Constant competitive truthful mechanisms are proposed in~\cite{stacs/AndelmanAS05,esa/Kovacs05}, and the state-of-the-art mechanisms are the deterministic PTAS by Christodoulou and Kov{\'{a}}cs~\cite{siamcomp/ChristodoulouK13}, and by Epstein et al.~\cite{mor/EpsteinLS16}.

\paragraph{Truthful Mechanism for Selfish Online Jobs.}
The Greedy algorithm is the most well-known online load balancing mechanism that is truthful against selfish online jobs.
The algorithm has been shown to be $2$-competitive on identical machines~\cite{siamam/Graham69}; $\Theta(\log m)$-competitive on related machines, and $m$-competitive on unrelated machines~\cite{jacm/AspnesAFPW97}.
Recently, a constant competitive truthful mechanism that adopts dynamic posted prices on machines is proposed~\cite{ec/FeldmanFR17}.

\medskip

There are other works~\cite{tpds/DutotPRT11,spaa/SkowronR13} focusing on producing fair schedules in a cooperative game-theoretic perspective.
However, since they do not study the truthfulness of jobs or machines, the model we study in this paper is different from theirs.

\paragraph{Hardness of Online Load Balancing.}
In contrast to the PTAS on related machines obtained when all jobs are given offline, due to the nature of unknown future, it is impossible to achieve any competitive ratio strictly smaller than $1.5$, even on two identical machines, and when both machines and all jobs are truthful: imagine $2$ jobs of size $1$, followed by another job with size $2$.
On related machines, it is shown in~\cite{jal/BermanCK00} that no deterministic algorithm can achieve a competitive ratio smaller than $2.438$, and no randomized algorithm can do better than $1.8372$-competitive.

\section{Preliminaries}

In this paper, we use $J$ to denote the set of online jobs to be scheduled on $m$ related machines $M$.
Let $p_j$ be the size of job $j\in J$, and $s_i$ be the speed of machine $i\in M$.
The size $p_j$ is the only private information of job $j\in J$.
Without loss of generality (by scaling) we assume that the minimum speed of a machine and the smallest size of an online job are both $1$.
Let $[k]$ denote $\{1,2,\ldots,k\}$, for any positive integer $k$.
We use $\log$ to denote $\log_2$.

We assume all jobs arrive one-by-one at time $0$.
There is no deadline or weight on the jobs.
Each job must be assigned to some machine immediately upon its arrival.
The objective is to assign jobs to machines to minimize the makespan.
In the following, we use \emph{workload} on a machine to denote the total size of jobs assigned to the machine; \emph{makespan} on the machine to denote its workload divided by its speed.

\paragraph{Selfish Agents.}
In our problem, we assume that the jobs are controlled by independent selfish agents, who is the only person that knows the size of the job.
The objective of a selfish job is to minimize its own completion time plus payments to the mechanism.
For the case when machines are also controlled by selfish agents, we assume that only the agent knows the speed of the machine, and its objective is to maximize (payment - makespan), where the payment is determined by the mechanism based on the final schedule.

\paragraph{Mechanism.}
We consider mechanisms that operate in two phases.
In phase-$1$, machines submit their speeds to the mechanism before any job arrives.
In the case when machines are selfish, the reported speed might be different from the true value.
The mechanism then announces the speeds of machines (which might not be equal to the speeds submitted).
In phase-$2$, each online job arrives and submits a size to the mechanism.
Each online job is informed of the announced speeds of machines, and the current schedule.
Based on the size of the job, the mechanism decides immediately which machine the job goes to, together with a charging to the job.
After assigning all jobs, the mechanism decides a payment to every machine.

\medskip

We call a mechanism \emph{truthful} if and only if every job maximizes its own utility by telling the truth, i.e., their true sizes.
We call a mechanism \emph{well-behaved} if the makespan of a machine is no smaller than that of any slower machine.
Throughout this paper, we use $\alg$ and $\opt$ to represent the makespan of our mechanism and the offline optimal makespan, respectively.

In the remaining part of this paper, we assume that before the machines submit their speeds, we fix a random ordering of machines that is used to break tie (when multiple machines with the same speed have the same makespan).

\subsection{Monotone Schedule}

It is shown in~\cite{stacs/AndelmanAS05} that for the case when machines are controlled by selfish agents, the scheduling mechanism is truthful if and only if the schedule is monotone.

\begin{definition}[Monotone]
	Consider any machine $i\in M$, and fix the speeds of all other machines arbitrarily.
	A scheduling mechanism is monotone if the workload of machine $i$ is non-decreasing w.r.t. the speed of machine $i$.
\end{definition}

More specifically, let $L(b)$ be the workload of the machine when the claimed speed is $\frac{1}{b}$, i.e., processing one unit of job requires $b$ units of time.
Let
\begin{equation}
\textstyle P(b) = b\cdot L(b) + \int_{b}^{\infty} L(t) dt \label{eq:payment}
\end{equation}
be the payment function to the machine.
It can be shown that if $L(b)$ is non-decreasing, then the mechanism (with payment function~\eqref{eq:payment}) is truthful~\cite{stacs/AndelmanAS05}.
For completeness, we give a short proof here.
\begin{lemma}
	The mechanism is truthful if $L(b)$ is non-increasing.
\end{lemma}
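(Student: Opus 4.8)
The plan is to write the machine's utility as a function of its reported processing rate and show that the monotonicity of $L$ forces the truthful report to be a maximizer. Fix the speeds of all other machines, and consider one machine whose true speed is $1/a$ (so one unit of work truly costs $a$ units of time) but which reports speed $1/b$. Because the mechanism's assignment depends only on the reported speed, the machine's realized workload is $L(b)$, while its \emph{actual} makespan is $a\cdot L(b)$. Plugging the payment rule~\eqref{eq:payment} into the objective (payment minus makespan), the utility is
\[
U(b) = P(b) - a\,L(b) = (b-a)\,L(b) + \int_b^{\infty} L(t)\,dt .
\]
First I would evaluate this at the truthful report $b=a$, obtaining $U(a)=\int_a^{\infty}L(t)\,dt$, and then examine the sign of the deviation gap
\[
U(b) - U(a) = (b-a)\,L(b) - \int_a^b L(t)\,dt
\]
for an arbitrary misreport $b$ (interpreting $\int_a^b$ as an oriented integral).

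The core of the argument is a short comparison in each direction, driven entirely by $L$ being non-increasing. If the machine overreports its per-unit time ($b>a$), then $L(t)\ge L(b)$ for every $t\in[a,b]$, so $\int_a^b L(t)\,dt \ge (b-a)L(b)$ and hence $U(b)-U(a)\le 0$. If instead $b<a$, I would rewrite the gap as $U(b)-U(a) = (b-a)L(b) + \int_b^a L(t)\,dt$ and use $L(t)\le L(b)$ for $t\in[b,a]$ to get $\int_b^a L(t)\,dt \le (a-b)L(b)$, which again yields $U(b)-U(a)\le 0$. Either way the truthful report weakly dominates every deviation, which is precisely the statement of truthfulness.

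I do not anticipate a genuine difficulty, since this is the standard single-parameter payment characterization and, once the utility is in the form $(b-a)L(b)+\int_b^{\infty}L(t)\,dt$, the monotonicity of $L$ does all the work. The one point requiring care is bookkeeping on the direction of monotonicity: a larger $b$ encodes a \emph{slower} claimed speed, so the monotone condition (workload non-decreasing in speed) is equivalent to $L$ being non-increasing in $b$, and the two cases above must invoke the inequality $L(t)\ge L(b)$ above $a$ and $L(t)\le L(b)$ below $a$ respectively. Getting these two directions aligned with the sign of $(b-a)$ is the only place an error could creep in.
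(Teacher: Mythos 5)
Your proof is correct and follows essentially the same route as the paper: you write the utility as $(b-a)L(b)+\int_b^{\infty}L(t)\,dt$ and compare the truthful report against over- and under-reports using the monotonicity of $L$ in each direction, which is exactly the paper's two-case computation with $b^*=a$. The only cosmetic difference is that you bound the gap $U(b)-U(a)$ directly rather than rewriting $P(b)-b^*L(b)$ as $\int_{b^*}^{\infty}L(t)\,dt$ plus a signed correction integral.
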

\begin{proof}
	Let $s$ be the true speed of the machine and $b^* = \frac{1}{s}$.
	It suffices to show that $b^*$ maximizes the utility function $P(b)-b^*\cdot L(b)$.
	We show that $\forall b\neq b^*$, we have $P(b)-b^*\cdot L(b) \leq P(b^*)-b^*\cdot L(b^*)$.
	
	If $b < b^*$ (claim to be faster than truth), then we have
	\begin{align*}
	& \textstyle P(b)-b^*\cdot L(b) = \int_b^\infty L(t)dt - (b^*-b)\cdot L(b) \\
	= & \textstyle \int_{b^*}^\infty L(t) dt + \int_{b}^{b^*} \left( L(t)-L(b) \right) dt
	\leq  P(b^*)-b^*\cdot L(b^*),
	\end{align*}
	where the inequality comes from the fact that $\forall t\geq b$, $L(t)\leq L(b)$.
	
	Similarly for $b > b^*$ (claim to be slower than truth), then
	\begin{align*}
	& \textstyle P(b)-b^*\cdot L(b) = \int_b^\infty L(t)dt + (b-b^*)\cdot L(b) \\
	= & \textstyle \int_{b^*}^\infty L(t) dt + \int_{b^*}^{b} \left( L(b)-L(t) \right) dt
	\leq P(b^*)-b^*\cdot L(b^*),
	\end{align*}
	where the inequality comes from $L(t)\geq L(b)$ for all $t\leq b$.
\end{proof}

\section{Na\"{i}ve Algorithms} \label{sec:naive}

In this section, we present two simple scheduling mechanisms that are truthful for both the jobs and the machines.
Basically, these mechanisms are two extreme cases, with one allocating all jobs on a single machine
and the other allocating all jobs uniformly.
As we will see, the competitive ratios of the mechanisms are very large.

We first propose a simple VCG-style mechanism ${\mathcal M}_\text{VCG}$ that employs payments to machines but no charging to jobs.
The assignment of jobs is trivial: we assign all online jobs to the machine with the highest claimed speed.

\paragraph{Mechanism ${\mathcal M}_\text{VCG}$.}
Let $t_1\leq \ldots \leq t_m$ be the speeds claimed by machines, where we break tie by the pre-fixed ordering.
We assign all jobs to machine $m$, and set the payment to the machine be $\frac{1}{t_{m-1}}\sum_{j\in J}q_j$, where $q_j$ is the processing time claimed by job $j$.
For truth-telling machines and jobs, we have $t_i = s_i$ for all $i\in M$ and $q_j = p_j$ for all $j\in J$.

\paragraph{Truthfulness of ${\mathcal M}_\text{VCG}$.}
Observe that the schedule does not depend on the sizes of online jobs, and thus is truthful for jobs (misreporting the size is not beneficial).
The mechanism is also truthful for the machines, as all jobs are processed as a grand bundle, with size $\sum_{j\in J} p_j$.
A machine is willing to process the bundle only if its actual processing time on the bundle is less than the payment.
Since the payment is determined by the second highest speed, the mechanism (which is actually a second price auction) is truthful for machines (bidders).

\paragraph{Competitive Ratio of ${\mathcal M}_\text{VCG}$.}

Since all jobs are assigned to machine $m$ with the highest speed $s_m$, $\alg$ is the makespan of machine $m$.
On the other hand, since $s_m\geq s_i$ for all $i\in [m]$, the processing time of any job on machine $m$ is the smallest among all machines.
Hence in the optimal schedule, the summation of makespans of all machines is at least $\alg$.
Therefore we have $\alg\leq m\cdot \opt$.

\medskip

Another simple truthful mechanism ${\mathcal M}_\text{Greedy}$ works as follows.
Regardless of the speeds submitted by the machines, the mechanism announces $s_i = 1$ for all machine $i$, and assigns each online job to the machine that finishes the job the earliest. In other words, the mechanism runs the Greedy algorithm on identical machines.

\paragraph{Truthfulness of ${\mathcal M}_\text{Greedy}$.}
Obviously, the mechanism is truthful for online jobs, as the completion time of every job is minimized, and there is no charging to the jobs.
The mechanism is also truthful for the machines, as the speeds submitted by machines are irrelevant to the schedule.

\paragraph{Competitive Ratio of ${\mathcal M}_\text{Greedy}$.}
Consider the optimal schedule (with makespan $\opt$).
Imagine that we keep the same schedule but change the speeds of all machines to $1$.
Let $\opt'$ be the resulting makespan, $s_\text{min} = \arg\min_{1\leq i\leq m}\{s_{i}\}$ and $s_\text{max} = \arg\max_{1\leq i\leq m}\{s_{i}\}$.
Then we have $\opt' \leq \frac{s_\text{max}}{s_\text{min}}\cdot \opt$, as the makespan of every machine is increased by a factor at most $\frac{s_\text{max}}{s_\text{min}}$.
Since Greedy is $2$-competitive for load balancing on identical machines, we have $\alg \leq 2\cdot \opt' \leq \frac{2 s_\text{max}}{s_\text{min}}\cdot \opt$.

As it appears to be, since we totally ignore the speeds of machines, the competitive ratio is determined by how different the speeds of machines are.

\section{A Well-Behaved Mechanism} \label{sec:mechanism}

As we have shown in the previous section, to achieve complete truthfulness, we may have to (somehow) ignore the speeds of the machines, or the sizes of online jobs. However, such kind of algorithms usually suffer from huge competitive ratios.

In this section, we focus on a more approachable measurement of schedules, the well-behavior.
As we have shown in Section~\ref{sec:intro}, since the machines are oblivious of the online jobs, a well-behaved mechanism (somehow) guarantees that the risk-averse rational selfish machines have no incentive to lie.

\subsection{Hardness of Well-behaved Schedules}

We first prove in this section that, while it seems natural and easy, it is actually impossible for a well-behaved scheduling algorithm to achieve an $o(\sqrt{m})$ competitive ratio in the online setting.
Moreover, our $\Omega(\sqrt{m})$ hardness result holds even when jobs and machines are truthful, and for the weaker definition of well-behavior as given in~\cite{mor/EpsteinLS16} (faster machines have no smaller workload).

\begin{proofof}{Theorem~\ref{th:hardness}}
	Consider a set of machines with speeds $s_i = 1+2^{i-m}$, where $i= 1,2,\ldots,m$.
	Let there be $m$ online jobs with different sizes $s_1,\ldots,s_m$.
	Obviously, the optimal schedule completes all jobs before time $1$ by assigning the job of size $s_i$ to machine $i$.
	
	However, we show that if the jobs arrive in increasing order of sizes, then any well-behaved schedule has makespan at least $\Omega(\sqrt{m})$, which yields a $\Omega(\sqrt{m})$ hardness result on the competitive ratio on any well-behaved scheduling algorithm.
	
	We prove by induction on the jobs that the number of jobs on the machines that receive at least one job, is strictly increasing w.r.t. the speed.
	Note that the statement implies that at the end of the algorithm, the number of jobs $k$ on the fastest machine $m$ is at least $\Omega(\sqrt{m})$, as otherwise the total number of jobs assigned is at most $\frac{k(k-1)}{2}<m$.	
	The statement trivially holds when there is no job, and after the first job arrives.
	
	Suppose the statement holds true before job $j$ arrives.
	
	We show that the statement holds true after assigning job $j$, given that the previous schedule is well-behaved.
	Suppose job $j$ is assigned to machine $i$. It suffices to show that after the assignment, the number of jobs on machine $i$ is smaller than that of machine $i+1$.
	Assume the contrary and let $c$ be the number of jobs on both machines. 
	Then the total workload of machine $i$ is at least $c+\frac{2^j}{2^m}$, while the workload on machine $i+1$ is upper bounded by
	\begin{equation*}
	\sum_{l=1}^c\left( 1+\frac{2^{j-l}}{2^m} \right) < c+\frac{2^j}{2^m}.
	\end{equation*}
	In other words, the workload (and also the makespan) of machine $i+1$ is strictly less than that of machine $i$, which contradicts the definition of well-behaved schedule. 
\end{proofof}

\subsection{Rounded Speeds}

Observe that well-behavior sometime is not necessary to achieve a truthful mechanism.
For example, recall the mechanism ${\mathcal M}_\text{Greedy}$ proposed in Section~\ref{sec:naive}.
Since the final schedule is irrelevant to the speeds, well-behavior is not needed to achieve truthfulness.
However, such kind of "speed insensitive" schedules often suffer from bad competitive ratios.
Nevertheless, this idea inspires us to develop a schedule that is almost well-behaved, with a limited speed insensitivity.
By relaxing the requirements slightly, we show that we can dramatically improve the competitive ratio, 
from $\Omega(\sqrt{m})$ to $O(\log m)$ for the general case
and to $O(1)$ for the case when the sizes of online jobs are bounded.

Recall that mechanism ${\mathcal M}_\text{Greedy}$
has a competitive ratio proportional to the difference of machine speeds.
The main idea behind the design of our new algorithm is, instead of ignoring the speeds completely, to achieve a $o(\sqrt{m})$ competitive ratio, we only regard a set of machines identical if their speeds are similar, e.g., differ by a multiplicative factor of $2$.
This idea motivates the design of the following mechanism ${\mathcal M}_\text{PPR}$, 
where $PPR$ is short for ``posted prices for rounded-down speeds''.

\paragraph{Mechanism ${\mathcal M}_\text{PPR}$.}
For every speed $s_i$ submitted by machine $i$, we round $s_i$ down to the largest power of $2$.
In other words, if $s_i \in [ 2^k, 2^{k+1} )$, then the mechanism announces $s_i = 2^k$.
We order the machines in non-decreasing order of rounded speeds: $s_1\leq s_2\leq \ldots \leq s_m$, where we break tie by the pre-fixed ordering.
At any moment, depending on the current makespans of machines, we set a price $\rho_i$ on every machine $i$ (that is independent of the incoming job).
When a new job arrives, it goes to the machine that minimizes its own cost, i.e., behaves selfishly.
At the end, let the payment to machines be defined as in Equation~\eqref{eq:payment}.

\medskip

Although we have not specified how we set the prices on the machines, we know that as long as the posted prices do not depend on the size of the new job, the mechanism is truthful for jobs (misreporting the size is not beneficial).
From now on, we assume that every machine has speed equals to some power of $2$.

In the following, we show that
\begin{enumerate}
	\item there is a way for mechanism ${\mathcal M}_\text{PPR}$ to set prices on machines such that the resulting schedule is well-behaved: machine with higher speed has no smaller makespan;
	
	\item the scheduling algorithm is $O(\log m)$-competitive.
\end{enumerate}

For ease of discussion, we give some necessary definitions first.

\begin{definition}
	For any job $j$, let $C_i(j)$ be the makespan of machine $i$ before job $j$ arrives; let $C_i$ be the makespan of machine $i$ at the end.
	Similarly, let $\rho_i(j)$ be the price on machine $i$ before job $j$ arrives.
\end{definition}

By definition, each online job $j$ goes to machine $i$ that minimizes the cost $C_i(j)+\frac{p_j}{s_i}+\rho_i(j)$.
We assume that the jobs are indexed by $1,2,\ldots,|J|$.
Thus $C_i(j+1)$ denotes the makespan of machine $i$ after assigning job $j$.

As mechanism ${\mathcal M}_\text{PPR}$ does not know when the online sequence of jobs stops, we maintain a well-behaved schedule at all time.
In other words, for any job $j$ and for any two machines $i$ and $i'$ with $s_i > s_{i'}$, we have $C_i(j) \geq C_{i'}(j)$.

\subsection{Dynamic Prices on Machines}

In this section, we show how mechanism ${\mathcal M}_\text{PPR}$ sets dynamic prices on machines such that (1) the prices are independent of the new job; (2) the resulting schedule is well-behaved.

Consider the moment before job $j$ arrives.
First, if there are multiple machines with the same speed, then except for the one with minimum makespan (break tie by the pre-fixed ordering), we set the prices on other machines to be infinity.
In other words, we never assign the next job to these machines.
For convenience of discussion, we assume w.l.o.g. that the speeds of all machines are different when the next job arrives.

Let $\pi_m = 0$.
For each machine $i<m$, define
\begin{equation*}
\textstyle \pi_i := \frac{s_i}{s_{i+1}}\cdot \big(C_{i+1}(j) - C_i(j) \big).
\end{equation*}

We set the price on machine $i$ to be $\rho_i(j) = \sum_{l\geq i}\pi_l$.

Note that by definition, the price $\rho_i(j)$
depends only on the recurrent makespans $C_i(j)$ of machines,
which is irrelevant to the size of the new job.
In our mechanism, the price of going to machine $m$ is $0$, 
while the price of going to machine $i$, where $i<m$, is $\pi_i$ more expensive than going to machine $i+1$.
Intuitively, the prices on the machines incentivize the jobs to go to faster machines, unless the makespan of a slower machine is much smaller.

\paragraph{Example.}
Suppose there are $m=3$ machines, which speed $1$, $2$ and $4$.
Let the online jobs have size $6$, $4$, $1$, $0.6$ (ordered by the arrival order).
By definition initially all prices on the machines are $0$.
Thus the first job goes to the third machine (with speed $4$), resulting in a makespan of $1.5$.
Before the second job arrives, the prices on the machines are $0.75$, $0.75$ and $0$.
Thus the second job (with size $4$) still goes to the third machine, as the objective is $2.5$, strictly smaller than $4$ and $2.75$, the objective when it goes to the first and second machine, respectively. 
Then the prices are updated to $1.25$, $1.25$ and $0$, and thus the third job (with size $1$) goes to second machine.
After that, the prices become $1.25$, $1$ and $0$, and the last job (with size $0.6$) goes to the second machine.
It is easy to see that the schedule the algorithm produces at every step is well-behaved.

\begin{lemma} \label{lemma:well-behave}
	The above dynamic prices on machines result in a schedule that is well-behaved, if jobs behave selfishly.
\end{lemma}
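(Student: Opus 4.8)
The plan is to establish the invariant by induction on the arriving jobs: right before each job $j$ arrives (and after restricting attention to the ``active'' machines, i.e.\ dropping the duplicate-speed machines whose prices are set to infinity, so that all speeds under consideration are distinct) the makespans satisfy $C_1(j)\le C_2(j)\le\cdots\le C_m(j)$, which is precisely the well-behaved condition. The base case is immediate, since before any job arrives every machine has makespan $0$. For the inductive step, suppose the schedule is well-behaved just before job $j$ arrives, and let $i$ be the active machine that the selfish job selects. Only machine $i$ changes: its makespan rises from $C_i(j)$ to $C_i(j)+\frac{p_j}{s_i}$, while every other makespan is unchanged. Since this single makespan only increases, the comparison against any slower machine still holds, and the only way well-behavior can break is if the increased makespan overtakes the next faster machine. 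Because the makespans were ordered before the arrival, it therefore suffices to prove the single inequality
$$C_i(j)+\frac{p_j}{s_i}\;\le\;C_{i+1}(j)$$
whenever $i<m$ (if $i=m$ there is no faster machine and nothing to check at the top).

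The heart of the argument is that this target inequality is exactly what the selfish choice delivers. Since the job weakly prefers machine $i$ to machine $i+1$, we have
$$C_i(j)+\frac{p_j}{s_i}+\rho_i(j)\;\le\;C_{i+1}(j)+\frac{p_j}{s_{i+1}}+\rho_{i+1}(j).$$
Substituting $\rho_i(j)-\rho_{i+1}(j)=\pi_i=\frac{s_i}{s_{i+1}}\bigl(C_{i+1}(j)-C_i(j)\bigr)$ and collecting terms, a common positive factor $\frac{s_{i+1}-s_i}{s_{i+1}}$ can be cancelled, and the inequality reduces precisely to $C_i(j)+\frac{p_j}{s_i}\le C_{i+1}(j)$. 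Thus the price gap $\pi_i$ is engineered so that a job is willing to descend from the faster machine $i+1$ to the slower machine $i$ only when doing so keeps machine $i$'s makespan below that of machine $i+1$, preserving the ordering.

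I expect the only delicate point to be verifying that the reduction above is a genuine equivalence rather than a one-sided implication, so that the selfish inequality can be ``read backwards'' to yield the target; this hinges on the factor $\frac{s_i}{s_{i+1}}$ in the definition of $\pi_i$ being exactly right, and I would carry out the cancellation explicitly to confirm that no sign or term is lost. Two consistency checks round out the argument. First, the inductive hypothesis $C_{i+1}(j)\ge C_i(j)$ guarantees $\pi_i\ge 0$, so all posted prices are non-negative and the pricing scheme is well-defined. Second, once $C_i(j)+\frac{p_j}{s_i}\le C_{i+1}(j)$ is established, the previously-ordered chain $C_{i+1}(j)\le C_{i+2}(j)\le\cdots$ shows that machine $i$'s new makespan stays below that of every faster machine, while $C_{i-1}(j)\le C_i(j)\le C_i(j)+\frac{p_j}{s_i}$ handles the slower side; hence the full ordering is restored and the induction closes.
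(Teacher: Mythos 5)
Your proposal is correct and follows essentially the same route as the paper's proof: induction on jobs, reduction to the single comparison $C_i(j)+\frac{p_j}{s_i}\le C_{i+1}(j)$ with the next faster machine, and deriving that inequality from the job's weak preference for machine $i$ over machine $i+1$ by substituting $\rho_i(j)-\rho_{i+1}(j)=\pi_i$ and cancelling the positive factor $\frac{s_{i+1}-s_i}{s_{i+1}}$. The extra checks you flag (non-negativity of the prices from the inductive ordering, and handling of duplicate-speed machines via infinite prices) are consistent with the paper's setup and do not change the argument.
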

\begin{proof}
	We prove the lemma by induction on the jobs: we show that as long as the schedule is well-behaved, then after assigning the new job, the schedule remains well-behaved.
	
	By definition, the schedule is well-behaved when there is no job.
	Now suppose that the schedule is well-behaved before job $j$ arrives.
	We show that after assigning job $j$ to the machine with minimum cost, the schedule remains well-behaved.
	
	Suppose job $j$ (of size $p_j$) goes to machine $i$.
	It suffices to show that after the assignment, the makespan $C_i(j+1)$ of machine $i$ is no larger than that of machine $i+1$, which is $C_{i+1}(j)$, as this is the only place where well-behavior may cease to hold.
	
	Given that $i$ minimizes the cost of job $j$, we have
	\begin{equation*}
	\textstyle C_i(j) + \frac{p_j}{s_i} + \sum_{l\geq i}\pi_l \leq C_{i+1}(j) + \frac{p_j}{s_{i+1}} + \sum_{l\geq i+1}\pi_l,
	\end{equation*}
	which is equivalent to
	\begin{align*}
	\textstyle \left(\frac{1}{s_i}-\frac{1}{s_{i+1}}\right)\cdot p_j &\leq C_{i+1}(j)-C_i(j) - \pi_i \\
	& \textstyle  = s_i\cdot \left( \frac{1}{s_i}-\frac{1}{s_{i+1}} \right)\cdot (C_{i+1}(j) - C_i(j)),
	\end{align*}	
	where the equality holds by definition of $\pi_i$.
	
	Rearranging the inequality, we have $C_i(j) + \frac{p_j}{s_i} \leq C_{i+1}(j)$.
	In other words, after assigning the new job $j$ to machine $i$, the makespan of $i$ is still no larger than that of $i+1$.
	Hence the schedule remains well-behaved after assigning the new job.
\end{proof}

Note that we can only prove that job $j$ goes to a machine without violating the well-behavior of the schedule.
As there are possibly many candidate machines job $j$ can go to without violating this property, the actual assignment depends on the makespans of all machines.
More specifically, mechanism ${\mathcal M}_\text{PPR}$ cannot guarantee that the new job will be assigned to the fastest or slowest candidate machine (which makes the analysis more difficult).

\subsection{Competitive Ratio}

In this section, we bound the competitive ratio of mechanism ${\mathcal M}_\text{PPR}$ by $O(\log m)$.
As we will show later in Lemma~\ref{lemma:O(1)-comp}, the competitive ratio can also be upper bounded by $O(p_\text{max})$, where $p_\text{max}$ is the maximum size of an online job (recall that the minimum size of online jobs is scaled to $1$).
Thus we can upper bound the competitive ratio by $\min\{ \log m, p_\text{max} \}$.

Consider the final schedule produced by mechanism ${\mathcal M}_\text{PPR}$, after assigning all jobs.
Recall that $\opt$ is the optimal makespan and $\alg$ is the makespan of mechanism ${\mathcal M}_\text{PPR}$.
Let $\opt_2$ be the optimal makespan when the speeds of all machines are rounded down to powers of $2$.
Then immediately we have $\opt \leq \opt_2 \leq 2\cdot \opt$, as the optimal schedule before the rounding gives a feasible schedule with makespan at most $2\cdot \opt$ after the rounding.
Thus it suffices to show that $\alg = O(\log m)\cdot \opt_2$.
In the following, we assume that all machines have their speeds rounded down to powers of $2$.

For convenience of discussion, we order the machines such that $C_1 \leq C_2\leq \ldots \leq C_m$.
Since the schedule is well-behaved, we also have $s_1\leq s_2\leq \ldots\leq s_m$.
Then we have $\alg = C_m$. Let the speed of the fastest machine be $s_m = 2^K$.

\paragraph{Proof Outline.}
We first show that the makespans of any two machines with speed $2^k$ and $2^{k+1}$ differ by $O(\opt)$ at any moment (Lemma~\ref{lemma:difference-of-C}).
Intuitively, if the makespan of the faster machine is $\omega(\opt)$ larger than that of the slower machine, then the price our mechanism sets on the faster machine is $\omega(\opt)$ higher than the price on the slower machine.
Then we show that some job assigned to the faster machine has a strictly smaller objective when assigned to the slower machine, which is a contradiction.
Equipped with this property, it suffices to show that only $O(\log n)$ difference speeds matters: if a machine is $2^{O(\log n)} > n$ times slower than the fastest machine, then its contribution to the total workload is negligible.

\medskip

Let $R_k = \{ i\in[m] : s_i = 2^k \}$ be the set of machines with speed $2^k$, and $R_{\geq k} = \bigcup_{j\geq k} R_j$ be the machines with speed at least $2^k$.

We prove the following lemma, which is crucial for the proof of the competitive ratio.

\begin{lemma} \label{lemma:difference-of-C}
	Suppose $C_i\geq C$ for all machine $i\in R_{\geq k}$, for some $C > 0$, then for all machine $l \in R_{\geq k-1}$, we have $C_l \geq C-4\cdot \opt_2$.
\end{lemma}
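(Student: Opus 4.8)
The plan is to argue by a volume comparison against the rounded optimum $\opt_2$, rather than through a single ``bad'' job. First I would reduce to the case that $l$ is the machine of speed $2^{k-1}$ with the \emph{smallest} makespan, since every other machine in $R_{\geq k-1}$ has makespan at least as large (by well-behavior, and because machines in $R_{\geq k}$ already satisfy $C_i \geq C$). Writing $S = \sum_{i \in R_{\geq k}} s_i$, the total work our mechanism places on $R_{\geq k}$ is $\sum_{i \in R_{\geq k}} C_i s_i \geq C \cdot S$, whereas the rounded optimum can place at most $\opt_2 \cdot S$ units of work on these machines. Hence a mass of at least $(C-\opt_2)S$ worth of jobs is placed by our mechanism on $R_{\geq k}$ but by the optimum on machines of speed at most $2^{k-1}$; call these the \emph{displaced} jobs and let $X$ be their total size, so $X \geq (C-\opt_2)S$. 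Crucially, each displaced job $j$ satisfies $p_j \leq 2^{k-1}\opt_2$, because the optimum runs it on a machine of speed at most $2^{k-1}$ within makespan $\opt_2$.

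Next I would bound $X$ from above by showing every displaced job is assigned \emph{early}, i.e.\ while its host machine still has small makespan. Fix a displaced job $j$, assigned by the mechanism to some $i' \in R_{\geq k}$, and let $l'$ be the speed-$2^{k-1}$ representative (the one with finite price) at the moment $j$ arrives. Since $j$ prefers $i'$ to $l'$, we have $C_{i'}(j) + \frac{p_j}{s_{i'}} + \rho_{i'}(j) \leq C_{l'}(j) + \frac{p_j}{s_{l'}} + \rho_{l'}(j)$. I would control the price gap by telescoping: $\rho_{l'}(j) - \rho_{i'}(j) = \sum_{l' \leq \ell < i'} \pi_\ell(j) \leq \tfrac12 \sum_{l' \leq \ell < i'}(C_{\ell+1}(j) - C_\ell(j)) = \tfrac12(C_{i'}(j) - C_{l'}(j))$, using $\pi_\ell(j) = \frac{s_\ell}{s_{\ell+1}}(C_{\ell+1}(j) - C_\ell(j))$ and the fact that consecutive rounded speeds differ by a factor at least $2$. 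Substituting, then using $p_j/s_{l'} \leq \opt_2$ and $C_{l'}(j) \leq C_l$, gives $C_{i'}(j) \leq C_l + 2\opt_2$. Thus every displaced job on $i'$ is placed while the workload of $i'$ is at most $(C_l + 2\opt_2)s_{i'}$, so the displaced jobs on $i'$ have total size at most $(C_l + 2\opt_2)s_{i'} + 2^{k-1}\opt_2 \leq (C_l + \tfrac52\opt_2)s_{i'}$. Summing over $i' \in R_{\geq k}$ yields $X \leq (C_l + \tfrac52\opt_2)S$, and combining the two bounds on $X$ gives $C - C_l \leq \tfrac72\opt_2 < 4\opt_2$.

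The main obstacle is the upper bound on $X$, and in particular resisting the temptation to run a single-job argument. If one simply takes, say, the last job on the slowest machine of $R_{\geq k}$, its size can be as large as $\Theta(\alg)$ rather than $O(\opt_2)$ (it may be forced onto that machine only because the faster machines are already heavily loaded), so the comparison with the slower machine becomes vacuous. The key insight that unlocks the proof is to isolate exactly the jobs that the optimum sends to slow machines: these are provably small ($p_j \leq 2^{k-1}\opt_2$), which is what lets the posted-price inequality force them to be scheduled while $i'$ is still lightly loaded. The remaining delicate point is the telescoping step, where the factor-$2$ separation of consecutive rounded speeds turns the cumulative price $\rho_{l'}(j) - \rho_{i'}(j)$ into a genuine contraction of the makespan gap (the $\tfrac12$ factor) rather than a useless equality; this is precisely where rounding speeds down to powers of $2$ is needed.
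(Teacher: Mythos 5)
Your proof is correct, and it rests on the same two pillars as the paper's argument: (i) a capacity comparison against the rounded optimum on $R_{\geq k}$, which forces the existence of jobs that the mechanism puts on $R_{\geq k}$ but the optimum runs on machines of speed at most $2^{k-1}$, and which are therefore of size at most $2^{k-1}\cdot\opt_2$; and (ii) the posted-price preference inequality combined with the telescoped bound $\rho_{l'}(j)-\rho_{i'}(j)\leq\frac{1}{2}\left(C_{i'}(j)-C_{l'}(j)\right)$, which is exactly the paper's price computation. Where you genuinely differ is the bookkeeping. The paper extracts a single witness job: it defines the set $J'$ of jobs responsible for the last $2\cdot\opt_2$ of makespan increase on $R_{\geq k}$, argues that at least one job of $J'$ is displaced, and derives a contradiction from the assumption $C_{i'}(j)<C-4\cdot\opt_2$ for that one job. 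You instead apply the pricing inequality to \emph{every} displaced job, conclude that each one lands on its host machine while that machine's makespan is still at most $C_l+2\cdot\opt_2$, and close with the aggregate count $(C-\opt_2)S\leq X\leq (C_l+\frac{5}{2}\opt_2)S$. Your version avoids the somewhat delicate definition of the window $J'$ and the proof by contradiction, and incidentally sharpens the constant from $4$ to $\frac{7}{2}$; the paper's version needs the pricing inequality only once. Two small points you leave implicit, which you should state: the telescoping step requires each term $C_{\ell+1}(j)-C_{\ell}(j)$ to be nonnegative, which is the maintained well-behavior invariant of Lemma~\ref{lemma:well-behave}; and the argument presumes $R_{k-1}\neq\emptyset$, which is without loss of generality since otherwise the claim for $R_{\geq k-1}$ reduces to the hypothesis on $R_{\geq k}$.
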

\begin{proof}
	It suffices to consider the non-trivial case when $C \geq 4\cdot \opt_2$.
	Moreover, as we have $C_i\geq C$ for all machine $i\in R_{\geq k}$, it suffices to consider the machines $l$ in $R_{k-1}$.
	Let $J'$ be the jobs that contribute to the last $2\cdot \opt_2$ increase of makespan on machines in $R_{\geq k}$. 
	In other words, before the first job in $J'$ arrives, the makespan of each machine $i\in R_{\geq k}$ is at most $C_i - 2\cdot \opt_2$.
	
	Observe that the total size of the jobs in $J'$ is at least $2\cdot \opt_2 \cdot \sum_{i\in R_{\geq k}}s_i$, while the total size of jobs the optimal schedule processes on machines $R_{\geq k}$ is at most $\opt_2 \cdot \sum_{i\in R_{\geq k}}s_i$.	
	Hence we know that in $\opt_2$, at least one of the jobs in $J'$, say job $j$, is processed on some machine not in $R_{\geq k}$.	
	Since the processing time of job $j$ on machines in $R_{k-1}$ is at most $\opt_2$, and the machines not in $R_{\geq k}$ have speeds at most $2^{k-1}$, we have $p_j \leq \opt_2 \cdot 2^{k-1}$.
	
	Now consider the time when job $j$ arrives.
	
	By definition, in our mechanism job $j$ is assigned to some machine $i\in R_{\geq k}$.
	Moreover, we have
	\begin{equation*}
	\textstyle	C_i(j) + \frac{p_j}{s_i} \geq C_i - 2\cdot \opt_2 \geq C - 2\cdot \opt_2.
	\end{equation*}
	
	Let $i'\in R_{k-1}$ be the machine with minimum makespan before job $j$ arrives.
	If $C_{i'}(j) \geq C - 4\cdot \opt_2$ then we are done, as the makespans of machines do not decrease.
	
	Assume the contrary that $C_{i'}(j) < C - 4\cdot \opt_2$.
	Then we have
	\begin{align}
	C_i(j) - C_{i'}(j) & \textstyle \geq C - \left(2\cdot \opt_2+\frac{p_j}{s_i}\right) - (C - 4\cdot \opt_2) \nonumber \\
	& \textstyle = 2\cdot \opt_2 - \frac{p_j}{s_i}. \label{eq:lower-bound}
	\end{align}
	
	Recall that for any machine $l$, we have
	\begin{equation*}
	\pi_l = \frac{s_l}{s_{l+1}}\cdot (C_{l+1}(j) - C_l(j)) \leq \frac{1}{2}\cdot (C_{l+1}(j) - C_l(j)),
	\end{equation*}
	and the price on machine $l$ is $\sum_{l'\geq l} \pi_{l'}$.
	Hence the price on machine $i'$ is at most $\frac{1}{2}\left( C_i(t) - C_{i'}(t) \right)$ larger than that of machine $i$.
	
	Recall that the cost of going to machine $i$ is given by
	\begin{equation*}
	\textstyle	C_i(j) + \frac{p_j}{s_i} + \sum_{l\geq i}\pi_l.
	\end{equation*}
	
	On the other hand, the cost of going to machine $i'$ is at most
	\begin{align*}
	&\textstyle C_{i'}(j) + \frac{p_j}{2^{k-1}} + \frac{1}{2}\left( C_i(j) - C_{i'}(j) \right) + \sum_{l\geq i}\pi_l \\
	\leq &\textstyle C_{i}(j) + \opt_2 - \frac{1}{2}\left( C_i(j) - C_{i'}(j) \right) + \sum_{l\geq i}\pi_l \\
	\leq &\textstyle C_{i}(j) + \frac{1}{2}\cdot \frac{p_j}{s_i} + \sum_{l\geq i}\pi_l,
	\end{align*}
	where the last inequality holds by Inequality~\eqref{eq:lower-bound}.
	
	Thus we have a contradiction, as the cost is strictly smaller when job $j$ goes to machine $i'$ while the mechanism assigns job $j$ to machine $i$.
\end{proof}

\begin{lemma} \label{lemma:logm-competitive}
	The scheduling algorithm is $O(\log m)$-competitive.
\end{lemma}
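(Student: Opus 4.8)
The plan is to bound $\alg = C_m$ in terms of $\opt_2$ (recall $\opt \le \opt_2 \le 2\cdot\opt$) by combining the ``local'' gap bound of Lemma~\ref{lemma:difference-of-C} with a global volume argument. The whole approach rests on the fact that \emph{many} machines, namely all those whose speed is within a factor roughly $m$ of the fastest speed $2^K$, must carry a makespan close to $C_m$; since these machines have large total speed, a counting argument will force $C_m$ to be small. Concretely, I would use $\sum_i C_i s_i = \sum_{j} p_j \le \opt_2 \sum_i s_i$ as the global constraint, and the iterated Lemma~\ref{lemma:difference-of-C} to lower bound the left-hand side.

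First I would establish the base of the iteration: every machine in the top speed class $R_{\ge K}=R_K$ has makespan at least $C_m-\opt_2$. This step uses the equal-speed pricing rule (not Lemma~\ref{lemma:difference-of-C}): among machines of equal speed the mechanism only ever assigns a job to the one of currently minimum makespan, so examining the last job placed on the maximum-makespan machine of $R_K$ shows its makespan exceeds the minimum one in $R_K$ by at most $p_\text{max}/2^K$, and $p_\text{max}/2^K \le \opt_2$ since the largest job must be processed on some machine of speed at most $2^K$. As $m\in R_K$ attains $C_m$, every $i\in R_{\ge K}$ satisfies $C_i\ge C_m-\opt_2$.

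Next I would iterate Lemma~\ref{lemma:difference-of-C}, applying it with $k=K$, then $k=K-1$, and so on. A telescoping induction yields, for every $0\le t\le K$,
\[
C_i \ge C_m - (1+4t)\cdot \opt_2 \qquad \text{for all } i\in R_{\ge K-t}.
\]
Whenever the right-hand side drops below $0$ the statement is vacuous, but then $C_m\le (1+4t)\opt_2$ already proves the lemma, so I may assume it stays positive. Finally I would fix $t=\lceil \log m\rceil$ and invoke the volume constraint: restricting $\sum_i C_i s_i \le \opt_2\sum_i s_i$ to $R_{\ge K-t}$ gives
\[
\big(C_m-(1+4t)\opt_2\big)\sum_{i\in R_{\ge K-t}} s_i \;\le\; \opt_2\sum_{i} s_i .
\]
Every machine outside $R_{\ge K-t}$ has speed at most $2^{K-t-1}$ and there are at most $m-1$ of them, so their total speed is at most $m\cdot 2^{K-t-1}$, while $\sum_{i\in R_{\ge K-t}} s_i\ge s_m=2^K$. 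With $t=\lceil\log m\rceil$ the speed ratio $\frac{\sum_i s_i}{\sum_{i\in R_{\ge K-t}} s_i}$ is at most $1+m/2^{t+1}\le \tfrac32$, whence $C_m\le (1+4t+\tfrac32)\opt_2=O(\log m)\cdot\opt_2=O(\log m)\cdot\opt$.

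I expect the main obstacle to be the bookkeeping that stitches the three pieces together cleanly rather than any single hard estimate, since Lemma~\ref{lemma:difference-of-C} already carries out the conceptually difficult work. In particular, the base case must be argued from the equal-speed tie-breaking rule and not from Lemma~\ref{lemma:difference-of-C}, and the choice $t=\Theta(\log m)$ must be calibrated so the ``negligible slow machines'' contribute only a bounded fraction of the total speed; the remaining risks are an off-by-constant in the telescoping sum and mishandling the degenerate case in which the iterated lower bound turns non-positive.
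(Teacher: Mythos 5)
Your proposal is correct and follows essentially the same route as the paper's proof: a base case giving $C_i\geq \alg-\opt_2$ on the top speed class, $\Theta(\log m)$ iterations of Lemma~\ref{lemma:difference-of-C} to lower bound the makespans of all machines with speed at least $s_m/m$, and a volume argument showing these machines' total speed is at least half of $\sum_i s_i$. The only differences are cosmetic (you phrase the final counting step as a direct inequality rather than a contradiction, and you spell out the equal-speed tie-breaking in the base case, which the paper leaves terse).
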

\begin{proof}
	Recall that $\alg = C_m$ and $R_K$ contains the machines with highest speed.	
	For all $i\in R_K$, we have $C_i \geq \alg - \opt_2$, as no job has processing time strictly larger than $\opt_2$ on the fastest machines.	
	By applying Lemma~\ref{lemma:difference-of-C} recursively, all machines in $R_{\geq K-\log m}$ (which are those of speed at least $\frac{s_m}{m}$) have completion time at least $\alg - (4\log m+1)\cdot \opt_2$.
	
	If $\alg - (4\log m+1)\cdot \opt_2 \geq 2\cdot \opt_2$, then the total size of all jobs is at least $2\cdot \opt_2 \cdot \sum_{j\in R_{\geq K-\log m}}s_i$. On the other hand, as the optimal solution finishes all jobs within makespan $\opt_2$, the total size of jobs is upper bounded by
	\begin{align*}
	\opt_2 \cdot \sum_{j\in[m]}s_i & \textstyle\leq \opt_2 \cdot \left( m\cdot \frac{s_m}{m} + \sum_{j\in R_{\geq K-\log m}}s_i\right) \\
	&\textstyle  < 2\cdot \opt_2 \cdot \sum_{j\in R_{\geq K-\log m}}s_i.
	\end{align*}	
	Hence we have $\alg - (4\log m+1)\cdot \opt_2 < 2\cdot \opt_2$, which implies $\alg = O(\log m)\cdot \opt_2 = O(\log m)\cdot \opt$, as desired.
\end{proof}

In the following, we show that if the online jobs have bounded sizes, i.e., $p_j = O(1)$ for all job $j$, then the competitive ratio of mechanism ${\mathcal M}_\text{PPR}$ improves to $O(1)$.

Let $p_\text{max} = \max_{j\in J}\{p_j\}$ be the maximum size of an online job.

\begin{lemma} \label{lemma:O(1)-comp}
	The competitive ratio of the algorithm is $O(p_\text{max})$.
\end{lemma}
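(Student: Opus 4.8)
The plan is to replace the naive ``last job on the fastest machine'' argument (which fails here because the posted prices can be large) with a sharper, \emph{price-aware} local inequality. Fix the well-behaved ordering so that $C_1\le C_2\le\cdots\le C_m$ and correspondingly $s_1\le s_2\le\cdots\le s_m$. The key step is to show that when a job $j$ is assigned to machine $i$ with $s_i>s_{i-1}$, comparing the cost of $i$ against the next-slower machine $i-1$ and substituting the definition $\pi_{i-1}=\frac{s_{i-1}}{s_i}\big(C_i(j)-C_{i-1}(j)\big)$ collapses the price terms and yields $C_i(j)-C_{i-1}(j)\le \frac{p_j}{s_{i-1}}\le \frac{p_\text{max}}{s_{i-1}}$. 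Applying this to the \emph{last} job sent to each machine, and using that within a single speed class the infinite prices on non-minimum machines (see the construction before Lemma~\ref{lemma:well-behave}) keep makespans balanced to within one job, I would deduce that in the final schedule any two consecutive machines satisfy $C_{i+1}-C_i\le \frac{2\,p_\text{max}}{s_i}$.

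Next I would telescope. Since every speed is a power of $2$ and $s_i\ge 1$, I group the consecutive gaps by speed level: each level $2^k$ contributes its within-class spread (at most $p_\text{max}/2^k$) plus one between-class gap (at most $2\,p_\text{max}/2^k$), so the whole spread is dominated by the geometric series $\sum_{k\ge 0} O(p_\text{max}/2^k)=O(p_\text{max})$. This gives $C_m-C_1=O(p_\text{max})$. I then lower-bound the minimum makespan by a work argument: the total load is $\sum_j p_j=\sum_i C_i s_i\ge C_1\sum_i s_i$, while feasibility of $\opt_2$ gives $\sum_j p_j\le \opt_2\sum_i s_i$, hence $C_1\le \opt_2$. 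Combining with $\opt_2\le 2\opt$ yields the additive estimate $\alg=C_m\le \opt_2+O(p_\text{max})$.

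It remains to turn this additive bound into the multiplicative $\alg=O(p_\text{max})\cdot\opt$. When $\opt_2\ge 1$ this is immediate, since $p_\text{max}\ge 1$ gives $\opt_2+O(p_\text{max})\le (1+O(p_\text{max}))\opt_2$. The delicate regime is $\opt_2<1$: here every job has size at least $1$ while a machine of speed $2^k$ can carry load at most $\opt_2\cdot 2^k$ within makespan $\opt_2$, so $\opt$ uses only machines of speed at least $1/\opt_2$. I would argue that $\mathcal{M}_\text{PPR}$ likewise confines all jobs to machines of speed $\Omega(1/\opt_2)$, which raises the \emph{effective} minimum speed in the telescoping sum and tightens the geometric series to $O(p_\text{max}\cdot\opt_2)$, giving $\alg\le \opt_2+O(p_\text{max}\opt_2)=O(p_\text{max})\cdot\opt_2$. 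I expect this confinement claim to be the main obstacle: the dynamic prices can make an empty slow machine look cheap, and because $\mathcal{M}_\text{PPR}$ need not route a job to the fastest eligible machine (the caveat noted after Lemma~\ref{lemma:well-behave}), ruling out assignments to over-slow machines will require a careful inductive invariant that keeps the intermediate makespans of the fast machines bounded by $O(p_\text{max}\opt_2)$, echoing the price-cancellation idea used in Lemma~\ref{lemma:difference-of-C}.
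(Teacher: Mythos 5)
Your first two steps are sound and in fact slightly sharper than the paper's: the price-cancellation inequality $C_i(j)-C_{i-1}(j)\le \frac{p_j}{s_{i-1}}$ is exactly the right local estimate (the paper derives the marginally weaker $C_{i'}\le C_i+\frac{3}{s_{i'}}\cdot p_\text{max}$ for adjacent speed classes), and the telescoping plus the averaging bound $C_1\le \opt_2$ correctly yields the additive estimate $\alg\le \opt_2+O(p_\text{max})$. But you have correctly diagnosed, and then not closed, the real difficulty: when $\opt$ is small (e.g.\ a handful of unit jobs on machines of speed $2^K$ alongside a speed-$1$ machine, where $\opt=\Theta(2^{-K})$), the additive $O(p_\text{max})$ term is not $O(p_\text{max})\cdot\opt$, and your ``confinement'' claim is left as a conjecture with an admission that proving it is the main obstacle. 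As written, the proof establishes the lemma only in the regime $\opt_2\ge 1$, so there is a genuine gap.

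The paper closes exactly this gap by changing the anchor of the telescoping sum. Instead of telescoping all the way down to the slowest machine (speed $1$), it lets $i$ be the slowest machine with \emph{non-zero workload}, say $s_i=2^k$, so the geometric series runs only over speeds $2^k,\ldots,2^K$ and gives every non-empty machine makespan at least $\alg-\frac{4p_\text{max}}{2^k}$. It then bounds $C_i$ from above against the fastest \emph{empty} machine, of speed $2^{k'}<2^k$: since that machine received nothing, the last job sent to $i$ preferred $i$, forcing $C_i\le \bigl(\frac{1}{2^k}+\frac{1}{2^{k'}}\bigr)p_\text{max}$. Finally a two-case analysis ties $2^k$ to $\opt$: if the optimum uses any machine slower than $2^k$ then $\opt\ge \frac{1}{2^{k'}}$ and the upper bound on $C_i$ already gives $\alg=O(p_\text{max})\cdot\opt$; otherwise the optimum also confines all work to speeds $\ge 2^k$, whence $\opt\ge \alg-\frac{4p_\text{max}}{2^k}$ and $\opt\ge\frac{1}{2^k}$, again giving $\alg\le\opt+4p_\text{max}\cdot\opt$. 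In effect, the paper never proves your confinement claim for $\mathcal{M}_\text{PPR}$; it sidesteps it by letting the algorithm's own slowest used speed, rather than $1/\opt_2$, define the cutoff, and paying for any ``over-slow'' assignment through the comparison with the empty machine. If you want to salvage your route, this is the piece you need to import: replace ``all jobs go to machines of speed $\Omega(1/\opt_2)$'' with ``the slowest machine the algorithm uses has makespan $O(p_\text{max})\cdot\opt$,'' which is provable directly from the posted prices.
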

\begin{proof}
	We first show that the difference in makespan between two machines of similar speed are bounded (in terms of $p_\text{max}$).
	
	First, observe that for any $k\leq K$, the makespans of any two machines in $R_k$ differ by at most $\frac{p_\text{max}}{2^k}$.
	For any machine $i\in R_k$ and $i'\in R_{k+1}$, we have
	\begin{equation*}
	\textstyle	C_{i'} \leq C_{i} + \left(\frac{1}{s_i}+\frac{1}{s_{i'}} \right)\cdot p_\text{max} = C_{i} + \frac{3}{s_{i'}} \cdot p_\text{max},
	\end{equation*}
	as otherwise the last job that goes to machine $i'$ should have been assigned to machine $i$.
	
	As before, suppose that $\alg = C_m$.
	Let $s_m = 2^K$, and $i$ be the slowest machine that has non-zero workload.
	Suppose $s_i = 2^k$, then any machine $j$ with speed at least $2^k$ has makespan at least
	\begin{align*}
	& \textstyle C_j \geq C_i-\frac{p_\text{max}}{s_i} \geq C_m - p_\text{max}\cdot \sum_{l=k+1}^K \frac{3}{2^l} -\frac{p_\text{max}}{2^k} \\
	\geq &\textstyle \alg - \frac{3\cdot p_\text{max}}{2^k} -\frac{p_\text{max}}{2^k} = \alg - \frac{4\cdot p_\text{max}}{2^k}.
	\end{align*}
	
	Let $2^{k'}$ be the highest speed of a machine that is slower than $i$, where $k' \leq k-1$.
	Then we have
	\begin{equation*}
	\textstyle	\alg - \frac{4\cdot p_\text{max}}{2^k} \leq (\frac{1}{2^k}+\frac{1}{2^{k'}})\cdot p_\text{max}.
	\end{equation*}
	
	Now we consider the optimal schedules.
	
	If in the optimal solution any job is scheduled on a machine with speed slower than $2^k$, then we have $\opt \geq \frac{1}{2^{k'}}$, which implies
	\begin{equation*}
	\textstyle	\alg \leq (\frac{1}{2^k}+\frac{1}{2^{k'}})\cdot p_\text{max} + \frac{4\cdot p_\text{max}}{2^k} = O(p_\text{max})\cdot \opt .
	\end{equation*}
	
	Otherwise we know that the optimal solution also schedules jobs only on machines with speed at least $2^k$, which implies $\opt\geq \alg - \frac{4\cdot p_\text{max}}{2^k}$, as otherwise the optimal solution cannot finish all jobs.
	If the optimal solution schedules some job on a machine with speed $2^k$, then we have $\opt\geq \frac{1}{2^k}$;
	otherwise we have $\opt \geq C_i \geq \frac{1}{2^k}$ (since all machines of speed at least $2^{k+1}$ has makespan at least $C_i$).
	In both cases, we have $\alg \leq \opt + \frac{4\cdot p_\text{max}}{2^k} = O(p_\text{max})\cdot \opt$.
\end{proof}

Finally, by the design of mechanism ${\mathcal M}_\text{PPR}$, it is easy to see that ${\mathcal M}_\text{PPR}$ satisfies the following properties.
Let $L_{i}$ be the set of jobs assigned to machine $i$ at the end of the mechanism.
A mechanism is {\em fair} if for any two machines $i$ and $i'$ such that $s_{i}=s_{i'}$, 
$\sum_{j\in L_{i}}p_{j} \geq \sum_{l\in L_{i'}\backslash\{k\}}p_{l}$ where $k\in L_{i'}$ is the last job assigned to $i'$;
is {\em anonymous} if the scheduling depends only on the configuration $\pi = (s_1,\ldots,s_m)$ of machine speeds.
In other words, if a machine submits speed $s_i$ while the configuration is given by $\pi$, then the jobs assigned to the machine is given by $L_{i}$, where $L_{i}$ is the set of jobs assigned to machine $i$ according to permutation $\pi$.

\begin{lemma}
	Mechanism ${\mathcal M}_\text{PPR}$ is fair and anonymous.
\end{lemma}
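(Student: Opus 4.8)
The plan is to prove the two properties separately, each by a direct argument that tracks the mechanism's internal state along the job sequence.

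For \emph{fairness}, I would fix two machines $i$ and $i'$ with $s_i = s_{i'}$ and let $k\in L_{i'}$ be the last job assigned to $i'$. The key is the rule from the dynamic-pricing construction: among machines of a common speed, the mechanism posts an infinite price on every machine except the one currently holding the minimum makespan (ties broken by the pre-fixed ordering), so the next job can only be routed to that minimum-makespan machine. Since $s_i = s_{i'}$, makespan and workload are proportional, so ``minimum makespan'' within this speed class coincides with ``minimum workload''. I would then examine the instant just before $k$ arrives. Because $k$ is the last job placed on $i'$, every other job of $L_{i'}$ has already arrived, so the workload of $i'$ at that instant is exactly $\sum_{l\in L_{i'}\setminus\{k\}}p_l$; and because $k$ is actually routed to $i'$, machine $i'$ must be the minimum-workload machine of its class at that moment, hence its workload is at most that of $i$ at the same instant. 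Finally, workloads are non-decreasing in time, so the workload of $i$ just before $k$ is at most its final value $\sum_{j\in L_i}p_j$. Chaining these three facts yields $\sum_{l\in L_{i'}\setminus\{k\}}p_l \le \sum_{j\in L_i}p_j$, which is exactly the fairness inequality; the only delicate point is tie-breaking when the two workloads are equal, but since the inequality is non-strict this causes no trouble.

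For \emph{anonymity}, I would prove by induction on the jobs that the entire makespan vector $(C_1(j),\dots,C_m(j))$, indexed by position in the sorted configuration $\pi=(s_1\le\cdots\le s_m)$, together with the position to which job $j$ is assigned, is a deterministic function of $\pi$ and the prefix of job sizes seen so far. The base case is immediate, since before any job all makespans are $0$, which depends only on the length of $\pi$. For the inductive step, every quantity the mechanism consults when placing job $j$ is already a function of $\pi$ and the current makespan vector: the rounded speeds are the entries of $\pi$; the incremental prices $\pi_l=\frac{s_l}{s_{l+1}}(C_{l+1}(j)-C_l(j))$ and the posted prices $\rho_l(j)=\sum_{l'\ge l}\pi_{l'}$ are computed from $\pi$ and the $C_l(j)$; the ``infinite price on all but the minimum-makespan machine of each speed class'' rule resolves its ties using the order of positions inside $\pi$; and the selected position minimizes the cost $C_l(j)+p_j/s_l+\rho_l(j)$, again a function of $\pi$, the makespans, and $p_j$. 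Hence the chosen position and the updated makespan vector depend only on $\pi$ and $p_1,\dots,p_j$, closing the induction, and in particular the final assignment $L_i$ to each position $i$ is determined by $\pi$ alone.

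I expect the main obstacle to be making the reductions ``depends only on $\pi$'' airtight in the presence of tie-breaking: I must verify that every tie the mechanism could encounter---equal makespans within a speed class for the infinite-price rule, and equal costs across positions for the final assignment---is resolved using information already encoded in $\pi$, namely the positional order induced by the pre-fixed machine ordering, so that no residual dependence on raw machine identities leaks into the schedule. Once this bookkeeping is in place, both claims follow from the short inductive and monotonicity arguments above, and the stated consequence of anonymity, namely symmetry of the workloads under swapping two machine speeds, follows simply by relabeling positions in the configuration.
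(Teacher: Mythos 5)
Your proof is correct and is essentially the intended argument: the paper states this lemma without a written proof (asserting it follows ``by the design of the mechanism''), and your two verifications --- fairness via the infinite-price/minimum-makespan rule applied at the arrival of the last job on $i'$ plus monotonicity of workloads in time, and anonymity via induction showing the makespan vector and assignment are functions of the sorted configuration $\pi$ and the job-size prefix --- are exactly the design features the authors appeal to. Your explicit attention to the tie-breaking (that it is resolved positionally within $\pi$ rather than by raw machine identity) is the one point the paper glosses over, and you handle it correctly.
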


\section{Truthfulness for Machines} \label{sec:truthful-cases}

Although well-behavior is important on its own behalf, as it guarantees fairness, anonymity and symmetry (see the paragraph above and Section~\ref{ssec:our_results}) between machines,
in this section, we show that 
well-behavior is indeed useful towards achieving truthfulness against selfish machines.
We show several cases for which our mechanism is truthful for both selfish jobs and machines.
Recall that the machines are oblivious of the online jobs.
Given that there are many cases under which the mechanism is monotone, selfish machines are more likely to be truthful, so as to get rid of the risk of a decrease in the utility.

Recall that to achieve truthfulness for machines, it suffices to show that the mechanism is monotone~\cite{stacs/AndelmanAS05}.
First, we show that when the online jobs have unit size, then mechanism ${\mathcal M}_\text{PPR}$ is monotone.

\begin{lemma}\label{lem:truthful:unit}
	Mechanism ${\mathcal M}_\text{PPR}$ is truthful when jobs have unit size.
\end{lemma}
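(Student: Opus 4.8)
The plan is to invoke the characterization of \cite{stacs/AndelmanAS05}: a scheduling mechanism is truthful for machines if and only if it is monotone, so it suffices to prove that ${\mathcal M}_\text{PPR}$ is monotone on unit jobs, i.e., that the workload of any fixed machine $i$ is non-decreasing in the speed it reports, with the speeds of all other machines held fixed. The first reduction exploits the rounding step: the entire execution of ${\mathcal M}_\text{PPR}$ — the sorted order, the prices $\pi_l$ and $\rho_l$, and hence every assignment — depends on the reported speeds only through their rounded-down powers of $2$. Consequently the workload $L(b)$ of machine $i$, viewed as a function of its reported speed, is a step function that is constant on each interval $[2^k, 2^{k+1})$. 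Monotonicity therefore holds automatically inside each bucket (the workload does not change at all), and it remains only to show that each upward jump is non-negative: increasing the rounded speed of machine $i$ from $2^k$ to $2^{k+1}$ (other rounded speeds fixed) does not decrease the number of unit jobs it receives.

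To establish this, I would run ${\mathcal M}_\text{PPR}$ on the same sequence of unit jobs in two coupled executions: execution $A$ in which machine $i$ has rounded speed $2^k$, and execution $B$ in which it has rounded speed $2^{k+1}$, all other machines being identical. Writing $n_i^A(j)$ and $n_i^B(j)$ for the number of jobs on machine $i$ just before job $j$ in the two runs, the goal is the invariant $n_i^B(j) \ge n_i^A(j)$, maintained by induction over the arriving jobs; applying it after the last job then yields the desired $L(2^{k+1}) \ge L(2^k)$. Since all jobs have unit size, a machine's makespan is determined by its job count and its (rounded) speed, so the two executions can be compared directly through their makespan profiles. The intuition driving the invariant is precisely the well-behavior guaranteed by Lemma~\ref{lemma:well-behave}: a faster machine is forced to carry a no-smaller makespan than every slower machine, and the prices $\rho_l$ are calibrated so that speeding up machine $i$ raises the relative attractiveness of $i$ to incoming jobs.

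The inductive step is where the real difficulty lies, and I expect it to be the main obstacle. Raising the speed of machine $i$ can change its position in the sorted-by-speed order, and because each price $\rho_l(j)=\sum_{l'\ge l}\pi_{l'}$ aggregates the gaps $\pi_{l'}=\frac{s_{l'}}{s_{l'+1}}(C_{l'+1}(j)-C_{l'}(j))$ over all faster machines, a single reindexing perturbs the prices — and hence the selfish choice — of \emph{every} machine simultaneously. One therefore cannot track machine $i$ in isolation; the invariant must be strong enough to control the whole makespan vector, yet the paper already notes that an arriving job need not go to a canonical (fastest or slowest) admissible machine, so the two runs may route a job to genuinely different machines. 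The plan is to strengthen the induction to a coordinate-wise comparison of the two speed-sorted makespan profiles and to argue, using the explicit cost $C_l(j)+\frac{1}{s_l}+\rho_l(j)$ together with $\pi_l \le \frac{1}{2}(C_{l+1}(j)-C_l(j))$, that whenever execution $A$ sends job $j$ to machine $i$, execution $B$ does so as well, while any job that $A$ diverts away from $i$ is, in $B$, no more attractive elsewhere than on $i$. Verifying that these cost comparisons survive the reindexing — leaning on the fact that unit sizes make the per-job processing term $1/s_l$ identical across jobs, so that the price differences are the only moving part — is the crux, and completing this case analysis is where I would concentrate the effort.
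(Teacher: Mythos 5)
Your reduction to monotonicity and the observation that the schedule depends only on the rounded speeds (so that only the jump of machine $i$'s announced speed from $2^k$ to $2^{k+1}$ needs to be analyzed) match the opening of the paper's proof. But the heart of the lemma --- showing that the number of unit jobs on machine $i$ does not decrease across that jump --- is not actually established in your proposal. You set up a job-by-job coupling of the two executions and propose to maintain a coordinate-wise domination invariant on the speed-sorted makespan profiles, but you explicitly leave the inductive step open, and that step is where all the content lives. Worse, the invariant you would need is doubtful: as the paper itself remarks, an arriving job need not go to the fastest or slowest admissible machine, so at an intermediate step execution $A$ may route a job to machine $i$ while execution $B$ routes it elsewhere (or vice versa), and a pointwise comparison of the two makespan vectors can break even if the final counts come out right. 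As written, this is a plan with an acknowledged hole rather than a proof.

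The paper avoids induction over the job sequence entirely. It argues by contradiction with a truncation: assume machine $i$ receives $L$ jobs at speed $2^k$ but fewer than $L$ at speed $2^{k+1}$, and w.l.o.g.\ cut the sequence right after the $L$-th job lands on $i$ in the slow run. Two static, endpoint estimates then suffice. Using the chain inequality $C_{l+1}(j)\le C_l(j)+\frac{1}{s_l}$ (otherwise the last job on machine $l+1$ would have preferred $l$), the assumed makespan bound $\frac{L-1}{2^{k+1}}$ on machine $i$ in the fast run forces every machine in $R_{\ge k+1}$ to have workload strictly below $\frac{L}{2^k}$, hence no more jobs than in the slow run. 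Separately, the price comparison at the moment the $L$-th job chooses $i$ over a slower machine $j$ yields $C_j \ge \frac{L+1}{2^k}-\frac{2}{s_j}$ in the slow run, and a floor/ceiling computation shows each slower (and each equal-speed) machine's job count does not increase in the fast run. Since the total number of unit jobs is conserved, machine $i$'s count cannot have strictly dropped --- a contradiction. To complete your argument you would need either to actually prove the profile-domination invariant (which I do not believe survives the reindexing issues you yourself identify) or to switch to this kind of global counting argument at the endpoint.
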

\begin{proof}
	Suppose $s_i = 2^k$ for some machine $i$.
	Note that as long as $s_i \in [ 2^k, 2^{k+1} )$, the schedule remains the same.	
	Hence it suffices to show that if the claimed speed of machine $i$ increases from $2^k$ to $2^{k+1}$, the number of jobs assigned to it does not decrease.
	Assume for the sack of contradiction that machine $i$ is assigned $L$ jobs with speed $2^k$ but assigned less than $L$ jobs with speed $2^{k+1}$.
	Moreover, we can assume w.l.o.g. that the $L$-th job assigned to machine $i$ when $s_i = 2^k$ is the last job of the whole sequence (otherwise we can remove the later jobs to achieve a smaller counter-example).
	
	In the following, we order the machines by their speeds: $s_1 \leq s_2\leq \ldots\leq s_m$.
	
	First, observe that for any job $j$ and any machine $l < m$, we have $C_{l+1}(j) \leq C_l(j)+\frac{1}{s_l}$, as otherwise the last job that goes to machine $l+1$ should have been assigned to machine $l$.
	
	Since the makespan of machine $i$ when $s_i = 2^{k+1}$ is at most $\frac{L-1}{2^{k+1}}$, the workload on any machine in $R_{\geq k+1}$ is at most
	\begin{equation*}
	\textstyle \frac{L-1}{2^{k+1}} + \sum_{l\geq k+1}\frac{1}{2^l} < \frac{L+1}{2^{k+1}} \leq \frac{L}{2^k}.
	\end{equation*}
	
	In other words, the workload on any machine in $R_{\geq k+1}$ decreases after machine $i$ claims a higher speed.
	
	In the following, we show that the number of jobs assigned to any machine with speed at most $2^k$ does not increase.
	
	First, for any machine in $R_k$, its makespan when $s_i = 2^k$ is at least $\frac{L-1}{2^k}$, which is larger than $\frac{L-1}{2^{k+1}}$, the upper bound on its makespan when $i$ claims speed $2^{k+1}$.
	It remains to consider machines with speed at most $2^{k-1}$.
	Fix any machine $j$ with speed $s_j \leq 2^{k-1}$.
	We show that its workload does not increase.
	
	Consider the moment when the $L$-th job goes to machine $i$ when $s_i = 2^k$.
	Since the job goes to machine $i$ instead of $j$, we have
	\begin{equation*}
	\textstyle	C_j + \frac{1}{2}(\frac{L-1}{2^k} - C_j) + \frac{1}{s_j} \geq \frac{L}{2^k},
	\end{equation*}
	where $\frac{1}{2}(\frac{L-1}{2^k} - C_j)$ is an upper bound on the difference between the prices of machine $i$ and $j$.
	Hence we have
	\begin{equation*}
	\textstyle	C_j \geq \frac{L}{2^{k-1}} - \frac{L-1}{2^k} - \frac{2}{s_j} =  \frac{L+1}{2^{k}} - \frac{2}{s_j}.
	\end{equation*}
	
	In other words, when $s_i = 2^k$, the number of jobs on machine $j$ is at least $\lceil \frac{(L+1)s_j}{2^k} \rceil  - 2$.
	On the other hand, when $s_i = 2^{k+1}$, the number of jobs on machine $j$ is at most $\lfloor \frac{(L-1)s_j}{2^{k+1}} \rfloor$.
	Observe that
	\begin{equation*}
	\textstyle \left\lceil\frac{(L+1)s_j}{2^k}\right\rceil - \left\lfloor\frac{(L-1)s_j}{2^{k+1}}\right\rfloor \geq \left\lceil \frac{(L+3)s_j}{2^{k+1}} \right\rceil \geq \left\lfloor \frac{(L-1)s_j}{2^{k+1}} \right\rfloor + 1.
	\end{equation*}
	
	If $\lfloor \frac{(L-1)s_j}{2^{k+1}} \rfloor = 0$, then machine $j$ is not assigned any job when $s_i = 2^{k+1}$ (hence its workload does not increase); otherwise
	\begin{equation*}
	\textstyle \left\lceil \frac{(L+1)s_j}{2^k} \right\rceil  - 2 \geq \left\lfloor \frac{(L-1)s_j}{2^{k+1}} \right\rfloor.
	\end{equation*}
	
	In both cases, the number of jobs assigned to machine $j$ does not increase when machine $i$ claims a higher speed.
\end{proof}

Finally, we show that our mechanism is truthful for the case $m=2$.
However, instead of rounding the machine speeds to powers of $2$, in the following proof, we assume that the speeds of machines are rounded down to powers of $4$.

\begin{lemma}
	Mechanism ${\mathcal M}_\text{PPR}$ is truthful when $m=2$.
\end{lemma}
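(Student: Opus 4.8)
The plan is to invoke the monotonicity characterization of \cite{stacs/AndelmanAS05} (restated before Section~\ref{sec:naive} via the payment function~\eqref{eq:payment}): to establish truthfulness it suffices to prove monotonicity, namely that, fixing the other machine's reported speed, the workload of a machine is non-decreasing in its own reported speed. Since speeds are rounded down to powers of $4$, the schedule and every workload are constant while a machine's speed stays inside an interval $[4^k,4^{k+1})$, so I only need to compare two runs on the same job sequence: a ``low'' run in which the varying machine $X$ has rounded speed $4^k$ and a ``high'' run in which it has rounded speed $4^{k+1}$, with the other machine $Y$ fixed at rounded speed $4^t$. The goal is $W_X^{\text{high}}\ge W_X^{\text{low}}$, where $W_X$ denotes $X$'s total workload.

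First I would record the clean two-machine assignment rule implied by the pricing of Section~\ref{sec:mechanism}. With the slow machine priced by $\pi=\frac{s_{\text{slow}}}{s_{\text{fast}}}(C_{\text{fast}}-C_{\text{slow}})$, a cost computation (identical in spirit to the proof of Lemma~\ref{lemma:well-behave}) shows that an arriving job of size $p$ is sent to the slow machine if and only if $C_{\text{slow}}+\frac{p}{s_{\text{slow}}}\le C_{\text{fast}}$, and to the fast machine otherwise. Because distinct powers of $4$ differ by a factor at least $4$, the price obeys $\pi\le\frac14(C_{\text{fast}}-C_{\text{slow}})$, a strictly tighter bound than the factor $\frac12$ used in the unit-size argument of Lemma~\ref{lem:truthful:unit}; this extra slack is exactly what the analysis exploits, and is the reason for rounding to powers of $4$ rather than $2$.

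Next I would split into the four transition types that can occur as $X$'s rounded speed jumps from $4^k$ to $4^{k+1}$: (i) $X$ stays the slower machine, (ii) $X$ passes from slower to equal speed, (iii) $X$ passes from equal to faster, and (iv) $X$ stays the faster machine. A key simplification is that the total workload $\sum_{j}p_j$ is independent of the reported speeds and $Y$'s speed is fixed, so $W_X^{\text{high}}\ge W_X^{\text{low}}$ is equivalent to $C_Y^{\text{high}}\le C_Y^{\text{low}}$ in every regime (using $W_Y=s_Y\cdot C_Y$); that is, monotonicity of $X$'s workload reduces to showing that $Y$'s final makespan does not increase when $X$ speeds up. I would prove each such makespan comparison by induction on the arriving jobs, coupling the two runs and maintaining a one-sided domination invariant between the makespan pairs $(C_X,C_Y)$ that is preserved by the assignment rule above; the threshold inequality together with the $\frac14$ price bound is used to show the invariant survives each case for where a job lands in the two runs. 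The equal-speed regime reduces to greedy on two identical machines, where the two workloads differ by at most one job.

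The hard part will be the two role-swapping transitions, cases~(ii) and~(iii), where the identity of the priced-down ``fast'' machine and the direction of the well-behaved ordering both change discontinuously. There I would align the two runs at the moment their speed orderings agree, bound the makespan discrepancy created at the equality threshold using the factor-$4$ gap (which limits how far ahead the faster machine can run before the pricing rule redirects a job back to the slower one), and track the pre-fixed tie-breaking carefully so that no job is ``lost'' across the swap. If the invariant can be chosen so that it is simultaneously preserved in all placement subcases and collapses to the desired inequality $C_Y^{\text{high}}\le C_Y^{\text{low}}$ at the end of the sequence, the lemma follows.
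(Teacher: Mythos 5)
Your skeleton matches the paper's: reduce truthfulness to monotonicity via \cite{stacs/AndelmanAS05}, compare the two runs with rounded speed $4^k$ versus $4^{k+1}$, use the exact two-machine assignment rule ``job of size $p$ goes to the slow machine iff $C_{\text{slow}}+p/s_{\text{slow}}\le C_{\text{fast}}$'' (which you correctly derive from the pricing), and convert ``$X$'s workload does not decrease'' into a statement about the other machine via conservation of total workload. The gap is that the actual content of the proof --- verifying that the coupling invariant is preserved at each arrival --- is left as a conditional promise (``if the invariant can be chosen so that it is simultaneously preserved\ldots the lemma follows''), and that is exactly where all the work is. The step you must carry out is: if a job of size $P$ is assigned to $X$ in the low run (so the rule gives $P\le s_X^{\text{low}}\,(C_Y^{\text{low}}-C_X^{\text{low}})$) but not in the high run (so $C_X^{\text{high}}+P/s_X^{\text{high}}\ge C_Y^{\text{high}}$), then $X$'s high-run workload \emph{already} dominates its low-run workload after the assignment. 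The paper closes this by combining the two threshold inequalities with the workload-conservation identity $L+4^iR=4L'+4^iR'$ and the well-behavedness bound $L'\le R'$; the factor $4$ in the rounding is what makes this algebra go through. Without exhibiting that computation, the argument is a plan rather than a proof.

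Separately, your case analysis is heavier than it needs to be, and the ``hard'' role-swapping cases you defer largely do not occur. A one-notch increase of the strictly slower machine from $4^k$ to $4^{k+1}\le s_Y$ cannot make it strictly faster than $Y$, so the only swap is ``equal $\to$ faster'', which starts from the symmetric configuration and is dispatched by a short direct computation (the workload of the sped-up machine is at least $\tfrac45(L+R)$, with a separate argument when the last job is large); your remark that the equal-speed regime ``differs by at most one job'' does not by itself give the needed comparison between the two runs. Your case (iv), where the faster machine speeds up, reduces by scale-invariance of the assignment rule to the slower machine slowing down, i.e., to the contrapositive of the case already handled. Recognizing these two reductions collapses your four cases to essentially the single inductive computation above.
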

\begin{proof}
	As before, we prove that the mechanism is monotone.
	Recall that by scaling we can assume that the slower machine has speed $1$.
	It suffices to show that if the speed of the slower machine is increased, then its load does not decrease.
	The case when we increase the speed of the faster machine is equivalent to the case when we decrease the speed of the slower machine.
	
	The case when the machines have the same speed $1$ is relatively easier.
	Let $L\leq R$ be the makespans (workloads) on the two machines.
	Suppose we increase the speed of one of the machine to $4$, then its workload is at least $\frac{4}{5}(L+R)$, which is at least $R$ if $R\leq 4 L$.
	If $R>4 L$, then the last job, which is of size at least $R-L > 3 L$, must be assigned to the faster machine. Thus the workload of the faster machine is at least $(R-L) +\frac{4}{5}\cdot 2L \geq R$. 
	
	Now suppose that the speed of machine $1$ is $1$, while the speed of machine $2$ is $4^i$, where $i\geq 1$. 
	We prove that for every job assigned to machine $1$ when $s_1 = 1$, if it is not assigned to machine $1$ when $s_1 = 4$, then the workload on the machine is no smaller than its original workload after the assignment, which implies that the workload does not decrease.
	
	Let $P$ be the size of the job and consider the moment when the job arrives, when $s_1 = 1$.
	Let $L$ and $R$ be the makespans of the two machines, where $L\leq R$.
	Since the job is assigned to machine $1$, we have $P\leq R-L$.
	
	Let $L'$ and $R'$ be defined similarly, when $s_1 = 4$.
	Since $P$ is not assigned to machine $1$ after the increase, we have $L'+\frac{P}{4} \geq R'$.
	Then we have
	\begin{align*}
	4\cdot L' &\textstyle \geq 4\cdot R' - P \geq 4\cdot(R-\frac{4\cdot L' - L}{4^i}) - (R-L)\\
	&\textstyle \geq 4\cdot \frac{R}{2} - R = R \geq L+P,
	\end{align*}
	where in the second inequality we use $L+4^i\cdot R = 4\cdot L' + 4^i\cdot R'$.
	Hence at any moment, the workload of machine $1$ when $s_1 = 4$ is no smaller than its workload when $s_1 = 1$.
\end{proof}

\section{Conclusion}

In this paper, we study well-behaved mechanisms for online strategic jobs.
We first present two mechanisms that are truthful for selfish jobs and machines, with competitive ratios $m$ and $\frac{2 s_\text{max}}{s_\text{min}}$, respectively.
Then we consider well-behaved schedules and show that any well-behaved mechanism must have a competitive ratio at least $\Omega(\sqrt{m})$.

We propose an almost well-behaved dynamic posted-price mechanism that is truthful for the jobs and has a competitive ratio of $O(\log m)$.
When the sizes of all jobs are bounded, the competitive ratio of our mechanism improves to be constant. 	
Finally, we show that when all jobs have unit size or there are only two machines,
our mechanism is actually truthful against selfish machines. 

We expect our algorithm and analysis to shed some lights on the design of competitive mechanisms that are truthful for both the selfish jobs and selfish machines.
It is also interesting to improve the ratio of $O(\log m)$ for almost well-behaved mechanisms.


\bibliographystyle{ACM-Reference-Format}  
\balance  
\bibliography{truthful}  

\end{document}